\documentclass[12pt]{llncs}
\usepackage{times,fullpage}
\usepackage{amsmath,amssymb, mathrsfs} 
\usepackage{xspace}
\usepackage{xcolor}

\definecolor{verde}{rgb}{0.,0.6,0.2}
\definecolor{bianco}{rgb}{1.,1.,1.}
\definecolor{marrone}{rgb}{0.7,0.2,0.1}
\definecolor{rosso}{rgb}{1,0,0}
\definecolor{giallo}{rgb}{1.0, 0.87, 0.0}
\definecolor{blu}{rgb}{0.03, 0.27, 0.49}
\definecolor{daffodil}{rgb}{0.03, 0.27, 0.49}
\definecolor{darkcerulean}{rgb}{1.0, 0.87, 0.0}

\newcommand{\w}{\ensuremath{\omega}\xspace}

\newcommand{\calH}{\ensuremath{\mathcal{H}}\xspace}
\newcommand{\pp}{\ensuremath{\mathbb {P}}\xspace}

\usepackage{tikz}

\usetikzlibrary{matrix}
\usetikzlibrary{positioning}
\usetikzlibrary{shapes.geometric}
\usetikzlibrary{shapes,decorations,shadows}  
\usetikzlibrary{decorations.pathmorphing}  
\usetikzlibrary{decorations.shapes}  
\usetikzlibrary{fadings}  
\usetikzlibrary{patterns}  
\usetikzlibrary{calc}  
\usetikzlibrary{automata}
\usetikzlibrary{chains,fit,shapes}
\usetikzlibrary{backgrounds}

%% MZ notations
\def\e{{\rm e}}

% definizioni sui paragrafi a margine pagina
\iffalse
\setlength{\marginparwidth}{3cm}
\setlength{\marginparsep}{5pt}
\newcommand{\margine}[1]{\marginpar{\scriptsize\color{blue} #1\normalcolor}}
\fi
\newcommand{\margine}[1]{}

\begin{document}

\title{Vertex-Connectivity Measures for Node Failure Identification in Boolean Network Tomography}

\author{Nicola Galesi\inst{1} \and Fariba Ranjbar\inst{1} \and Michele Zito\inst{2}}

\institute{Universit\'a La Sapienza Roma \and University of Liverpool}

\maketitle

\begin{abstract}
  In this paper we study the node failure identification problem in undirected
  graphs by means of Boolean Network Tomography. We argue that vertex
  connectivity plays a central role.
  We show tight bounds on the maximal identifiability in a
  particular class of graphs, the Line of Sight networks.
  We prove slightly weaker bounds on arbitrary networks.
  Finally we initiate the study of maximal identifiability in random
  networks. We focus on two models: the classical Erd\H{o}s-R\'enyi model,  and that of Random Regular graphs. The framework proposed in the paper allows a probabilistic analysis of the identifiability in random networks giving a tradeoff between the number of monitors to place and the maximal identifiability.
\end{abstract}

\section{Introduction}
\label{intro}

A
\margine{Network Tomography as a way to check network reliability} 
central issue  in communication networks  is to ensure that the structure works 
reliably. To this end it is of the utmost importance to discover as
quickly as possible those components that develop some sort of failure. Network Tomography is a family of distributed failure detection algorithms
based on the spreading of end-to-end measurements
\cite{coates02:_inter_tomog,vardi96:_networ_tomog} rather than directly
measuring individual network components. Typically a
network $G = (V,E)$ is given as a graph along with a
collection of paths \pp in it and the goal is to take measurements along such
paths to infer properties of the given network.
Quoting from \cite{DBLP:conf/imc/Duffield03} ``A key advantage of tomographic methods is that they require no participation from network elements other than the usual forwarding of packets. This distinguishes them from well-known tools such as {\tt traceroute} and {\tt ping}, that require ICMP responses to function. In some networks, ICMP response has been restricted by administrators, presumably to prevent probing from external sources. Another feature of tomography is that probing and the recovery of probe data may be embedded within transport protocols, thus co-opting suitably enabled hosts to form impromptu measurement infrastructures''.
 The
\margine{Connection to group testing}
approach is strongly related to group testing
\cite{du00:_combin_group_testin_its_applic} where, in general, one is
interested in making statements about individuals in a population by
taking group measurements. The main concern is to do so with the minimum
number of  tests.
In our setting, the connectivity
structure of the network constrains the set of feasible
tests. Graph-constrained group testing has been studied before,
starting with \cite{cheraghchi12:_graph}. We   are interested in using structural 
graph-theoretic properties to make statements about
the quality of the testing process.

Research
\margine{Boolean Network Tomography}
in Network Tomography is vast. The seminal works of Vardi \cite{vardi96:_networ_tomog}, and
Coates \cite{coates02:_inter_tomog},  or more recent surveys like
\cite{castro04:_networ_tomog} each have more that 500 citations, according to Google scholar. Methods and algorithms
vary dramatically depending on the network property of interest, or the measurements one has to rely on.
Boolean Network Tomography  (BNT) aims to identify corrupted components in a network 
using boolean measurements (i.e. assuming that elementary network components
can be in one of two states: ``working'' or ``not-working''). Introduced in \cite{DBLP:conf/imc/Duffield03,Ghita:2011:SNT:2079296.2079320},
the paradigm has recently attracted a lot of interest because of its simplicity.  
In this work  we  use BNT to identify failing  nodes.
Assume to have a set \pp of measurement paths over a node set $V$. We would like to know the state 
$x_v$  (with $x_v = 0$ corresponding to ``$v$ in working order'' and $x_v = 1$ corresponding to ``$v$ in a faulty state'') of each node $v \in V$. The  localization of the failing nodes in \pp  is captured by the solutions of the system: 
\begin{eqnarray}
\label{eqn:1}
\bigwedge_{p \in \pp} \left( \bigvee_{v\in p} x_v\equiv b_p \right )
\end{eqnarray}
where  $b_p$ models the (boolean) state  of the path $p \in \pp$.
Of course, systems of this form may have several solutions and
therefore, in general, the availability of a collection of end-to-end
measurements does not necessarily lead to the unique identification
of the failing nodes. We
\margine{Aim: study structural properties that guarantee good failure  detection capabilities}
will investigate properties of the underlying network that facilitate the solution of this problem.
In particular, we follow
the approach initiated by Ma et. al.
\cite{DBLP:conf/imc/MaHSTLL14} based on the
notion of {\em maximal identifiability} (see Section  \ref{sec:prel} for a precise definition).
 The metric aims to capture the maximal number of simultaneously
 failing nodes that can be uniquely identified in a network by means of
 measurement along a given path system.  It turns out that the network maximal identifiability is an interesting combinatorial measure and several studies
 \cite{BHK17,GR18,DBLP:conf/imc/MaHSTLL14,DBLP:conf/infocom/RenD16}
  have investigated variants of this measure in connection with
 various types of path systems. However, it seems difficult to come up
  with simple graph-theoretic properties that affect the given network
  identifiability. We contend  that working with the collection of
  simple paths between two disjoint sets of vertices $S$ and $T$ enables
  us to make good progress on this issue. 
More specifically
\margine{Result one: maximal identifiability of LoS networks}
we show that the proposed approach provides an almost tight characterization of
the maximal identifiability in {\em augmented hypergrids} (see
definition in Section \ref{sec:prel}) and more general Line-of-Sight (LoS) networks.
LoS networks were introduced by Frieze {\it et al.} in \cite{frieze07:_line_sight_networ} and have been  widely studied (see for instance
\cite{devroye13:_connec,DBLP:conf/saga/CzumajW07,sangha17:_indep_sets_restr_line_sight_networ,sangha17:_findin_large_indep_sets_line_sight_networ})
as models for communication patterns 
in a geometric environment containing obstacles. Like grids, LoS
networks can be embedded in a finite cube of
$\mathbb{Z}^d$, for some positive integer $d$. But LoS networks
generalize grids in that edges are allowed between nodes that are not necessarily next to each other in the network embedding.

Using the network vertex-connectivity, $\kappa(G)$, (i.e. the size of the minimal set of nodes disconnecting the graph) we are able to 
prove the following:

\begin{theorem}
  \label{los}
  Let $n$ be a positive integer, and $d$, and $\omega$ be fixed
  positive integers, independent of $n$.

  \begin{enumerate}
\item  The maximal identifiability
  of an augmented hypergrid  ${\cal H}_{n,d,\omega}$ on $n^d$
  vertices with range parameter $\omega$.
  is
  between $\kappa({\cal H}_{n,d,\omega})-1$ and $\kappa({\cal H}_{n,d,\omega})$.
  
\item   Let $G = (V,E)$ be an arbitrary $d$-dimensional LoS network
  with range parameter $\omega$. Then the maximal
  identifiability of $G$ is between $\kappa(G)-2$ and $\kappa(G)$.
\end{enumerate}
\end{theorem}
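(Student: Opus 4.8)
The plan is to argue through the combinatorial characterization of maximal identifiability recalled in Section~\ref{sec:prel}. Write $\calP$ for the system of all simple paths between the two monitor sets $S$ and $T$ and let $\mu$ be the corresponding maximal identifiability. Then $\mu\ge k$ holds exactly when every two distinct sets $X,Y\subseteq V\setminus(S\cup T)$ with $|X|,|Y|\le k$ are \emph{separated}, i.e. some $p\in\calP$ meets exactly one of them; and $\mu\le k$ as soon as one exhibits a single pair of distinct sets of size at most $k+1$ that no path separates. For the lower bounds the workhorse is a routing lemma: given such a pair $X\ne Y$, fix a vertex $v$ covered by $\calP$ in their symmetric difference, say $v\in X\setminus Y$; it suffices to produce $p\in\calP$ through $v$ with $p\cap Y=\emptyset$, and, should that fail, one may instead try to route through a vertex of $Y\setminus X$ avoiding $X$. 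Thus both lower bounds reduce to ``route a simple $S$--$T$ path through a prescribed covered vertex while avoiding a prescribed small set'', and the upper bound reduces to producing one small unseparated pair.

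For the upper bound $\mu\le\kappa$, common to both parts, let $C$ be a minimum-size vertex set meeting every path of $\calP$ (an internal $S$--$T$ separator); for augmented hypergrids and LoS networks one checks $|C|=\kappa(G)$, the separator being essentially the neighbourhood of a minimum-degree corner. Since every path in $\calP$ meets $C$, the sets $X:=C$ and $Y:=C\cup\{w\}$ (for any $w\notin S\cup T\cup C$) are distinct, of size at most $\kappa+1$, and are met by exactly the same members of $\calP$; no path separates them, so $\mu\le\kappa$.

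For part (2) we prove $\mu\ge\kappa-2$. Let $v\notin S\cup T$ be covered and let $Z\subseteq V\setminus(S\cup T\cup\{v\})$ with $|Z|\le\kappa-2$; we seek a simple $S$--$T$ path through $v$ in $G-Z$. We invoke the standard consequence of Menger's theorem that $v$ lies on a simple $S$--$T$ path in a graph $H$ iff $v$ is joined in $H$ to both $S$ and $T$ and no single further vertex deletion leaves $v$ in a component of $H$ disjoint from $S\cup T$. Both requirements hold for $H=G-Z$: as $|Z|\le\kappa-2<\kappa$ and $|Z|+1\le\kappa-1<\kappa$, the graphs $G-Z$ and $G-Z-c$ are connected for every vertex $c$. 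By the reduction this yields $\mu\ge\kappa-2$, and with the previous paragraph, part (2).

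Part (1) asks for the sharper $\mu\ge\kappa-1$, and this is where the work lies: at $|Z|=\kappa-1$ the connectivity slack of the previous argument is gone, $Z\cup\{c\}$ may be a minimum vertex cut, and the routing through $v$ can genuinely fail --- precisely when $Z\cup\{c\}$ isolates a small region containing $v$. The plan is to classify the small vertex cuts of $\calH_{n,d,\w}$ and show this case is harmless. Granting the structural facts that $\kappa(\calH_{n,d,\w})$ equals the minimum degree and that, for fixed $d,\w$ and $n$ large, the cube has strong enough vertex-isoperimetry that a set of at most $\kappa$ vertices can only isolate a single vertex, and only when that vertex is a corner and the set is its entire neighbourhood: if routing through $v\in X\setminus Y$ avoiding $Y$ fails then $v$ is a corner, $Y=N(v)\setminus\{c\}$ for some $c$, and $Y\setminus X\subseteq N(v)$. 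But the neighbours of a corner have degree strictly larger than $\kappa$, so they cannot be isolated by the at most $\kappa-1$ vertices of $X$; hence routing through a vertex of $Y\setminus X$ avoiding $X$ succeeds and separates $X$ from $Y$. Establishing the cut classification for augmented hypergrids --- and checking it really does cover every failure mode of the routing lemma --- is the main obstacle; granting it, the characterization gives $\mu\ge\kappa-1$, which together with the upper bound completes part (1).
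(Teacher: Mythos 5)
Your upper bound and your part (2) lower bound are essentially sound and close to what the paper does. The upper bound (compare a minimum $S$--$T$ separator $C$ with $C\cup\{w\}$) is exactly the paper's Theorem~\ref{thm:ub}; note, though, that the theorem is an existential statement over monitor placements, and the paper has to work (Theorem~\ref{thm:kappa3}) to choose $(S,T)$ so that simultaneously the minimum $S$--$T$ separator has size $\kappa(G)$ \emph{and} $|S|=|T|=\kappa(G)$ --- with a bad placement $\kappa_{ST}$ is $\infty$, and with $|S|$ or $|T|$ too small the lower bound degrades to $\min(\kappa,|S|,|T|)-2$. Your routing for part (2) via ``$G-Z$ is $2$-connected, hence contains an $s$--$t$ path through any prescribed $v$'' is a cleaner packaging of what the paper does by hand (two applications of Menger yielding $\pi^s_1,\pi^s_2,\pi^t_1,\pi^t_2$ and an explicit splicing argument); but the paper's notion of identifiability does not exclude $S\cup T$ from the failure sets, so you also need $S\setminus Y\neq\emptyset$ and $T\setminus Y\neq\emptyset$ (hence $|S|,|T|$ large enough) and a word about $v\in S\cup T$.

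The genuine gap is part (1). You correctly identify that pushing the routing argument from $\kappa-2$ to $\kappa-1$ requires controlling what a set of $\kappa$ vertices can cut off in $\calH_{n,d,\w}$, and you then \emph{grant} the needed classification (``a set of at most $\kappa$ vertices can only isolate a single corner vertex, and only via its full neighbourhood'') rather than prove it; since that classification carries the entire extra $+1$, part (1) is not established. The paper does not take this route at all: it fixes the canonical monitor placement ($|S|=|T|=d\w-1$ monitors on the north/west and south/east borders), picks $u\in U\setminus W$, and greedily builds a path from $(1,\dots,1)$ to $u$ inside $NW(u)$ by observing that a direction is blocked (``$W$-saturated'') only when $\w-1$ consecutive vertices of $W$ sit immediately beyond $u$ in that direction; with $d$ decreasing directions and $|W|\le d(\w-1)-1$ some direction is always unsaturated, so the walk can advance. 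A symmetric walk in $SE(u)$ reaches $(n,\dots,n)$, and since the two regions are disjoint the concatenation is a simple $S$--$T$ path through $u$ avoiding $W$. This is elementary and needs no cut classification. If you insist on your route you would have to prove a vertex-isoperimetric statement for augmented hypergrids (e.g.\ that every non-corner vertex and every connected set of at least two vertices has external neighbourhood larger than $d(\w-1)$), and additionally verify that every failure of the routing through $v$ really forces $Y\cup\{c\}$ to be one of the classified cuts with $v$ on the small side; neither step is immediate, and together they would be longer than the paper's direct construction.
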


The result on
\margine{Result 2: maximal identifiability of arbitrary graphs}
LoS networks immediately suggests 
the related question about  general graphs.
In this work we prove upper and lower bounds on the maximal identifiability
of any network $G$. The following statement summarizes our findings:

\begin{theorem}
\label{gen}
  Let $G = (V,E)$ be an arbitrary graph. Then the maximal
  identifiability of $G$ is at least $\lfloor \kappa(G)/2 \rfloor -2$
  and at most $\kappa(G)$.
\end{theorem}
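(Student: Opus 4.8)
The plan is to prove the two inequalities separately, using a minimum vertex cut for the upper bound and Menger's theorem (in its fan form) for the lower bound, and throughout to lean on the characterization of $m$-identifiability recalled in Section~\ref{sec:prel}, which — up to an additive constant that will not affect these bounds — reduces the whole question to the following statement: the system $\mathcal{P}$ of simple $S$–$T$ paths is $m$-identifiable iff for every vertex $v$ and every vertex set $K$ with $v\notin K$ and $|K|\le m$ there is a path in $\mathcal{P}$ through $v$ that avoids $K$.

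For the upper bound I would start from a minimum vertex cut $C$ with $|C|=\kappa(G)$, so that $G-C$ splits into non-empty parts $A$ and $B$ with no $A$–$B$ edges, and take the terminal sets $S$ and $T$ on the two sides of $C$. Pick $a\in A$ and $b\in B$. Any measurement path through $a$ that misses all of $C$ would be forced to stay inside $A$, and symmetrically for $b$ and $B$; since an $S$–$T$ path cannot be confined to a single side, \emph{every} path through $a$ (resp.\ $b$) must hit $C$. Consequently the failure sets $C\cup\{a\}$ and $C\cup\{b\}$, both of size $\kappa(G)+1$, produce identical observations on every path of $\mathcal{P}$, so $\mathcal{P}$ cannot be $(\kappa(G)+1)$-identifiable and the maximal identifiability is at most $\kappa(G)$. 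The one delicate point is getting the placement of $a$ and $b$ relative to $C$ exactly right; I expect this bookkeeping, rather than any genuinely hard combinatorics, to be where care is needed.

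For the lower bound, set $m=\lfloor\kappa(G)/2\rfloor-2$ and choose $S$ and $T$ sufficiently large and spread out (since $\kappa(G)\le|V|-1$ one can keep both $|S|$ and $|T|$ comfortably above $m$, which is necessary because a terminal set of size $s$ already caps identifiability at $s$). Given a target vertex $v$ and a forbidden set $K$ with $|K|\le m$, I would produce the required path through $v$ avoiding $K$ by splicing an $S$-to-$v$ segment with a $v$-to-$T$ segment that meet only at $v$ and both avoid $K$; the existence of such a pair of segments follows from a fan version of Menger's theorem applied to $G$ with $S$ and $T$ contracted to super-terminals. Here $\kappa(G)$-connectivity yields $\kappa(G)$ internally disjoint routes out of $v$, deleting $K$ destroys at most $m$ of them, and the requirement that among the survivors one still reaches $S$ and one still reaches $T$, disjointly, is exactly what costs the factor of two (the additive $-2$ absorbing the off-by-ones in the characterization and in bounding the connectivity of $G-K$). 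This last coordination step is the main obstacle: Menger's theorem hands out disjoint paths to a single target, or between two terminals, very cheaply, but here the path must simultaneously pass through a prescribed interior vertex, dodge a prescribed small set, and still be an $S$–$T$ path, and reconciling these three demands is the crux of the argument.
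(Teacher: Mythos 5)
Your upper bound is essentially the paper's: fix a minimum cut $C$, place $S$ and $T$ on opposite sides, observe that every measurement path through a vertex $a$ on the $S$-side (or $b$ on the $T$-side) must cross $C$, hence $\pp(C\cup\{a\})=\pp(C)=\pp(C\cup\{b\})$, and these two distinct sets of size $\kappa(G)+1$ are inseparable. That part is sound (the paper's version, Theorem \ref{thm:ub}, uses the pair $K$ and $K\cup\{w\}$, which is the same idea).

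The lower bound has a genuine gap, and it sits exactly where you flagged it. You reduce the claim to: for every $v$ and every $K$ with $|K|\le\lfloor\kappa(G)/2\rfloor-2$, find a simple $S$--$T$ path through $v$ avoiding $K$; and you propose to extract it from a Menger fan out of $v$ whose surviving branches, after deleting $K$, include one into $S$ and one into $T$. Nothing in your sketch controls how the fan's endpoints distribute between $S$ and $T$: a fan into $S\cup T$ may send all $\kappa(G)$ branches into $S$ (if $|S|\ge\kappa(G)$), while a fan into two contracted super-terminals has only two branches and is destroyed by a single deleted vertex. The paper resolves the coordination differently: it proves the placement-dependent bound $\mu(G)\ge\min(\kappa(G),|S|,|T|)-2$ (Theorem \ref{thm:lb:gen}) by applying ordinary Menger twice --- $\kappa(G)$ disjoint $s$--$v$ paths and $\kappa(G)$ disjoint $v$--$t$ paths, at least two on each side surviving the deletion of $K$ --- and then runs an explicit rerouting argument (the $z_{1j}$ construction) to splice a \emph{simple} $s$--$t$ path through $v$ out of the four possibly crossing survivors. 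The factor of two then comes not from the path construction at all, but from the monitor placement: $S$ and $T$ are taken to be the two halves of a minimum separator, so $|S|=\lfloor\kappa(G)/2\rfloor$. Your fan route can in fact be completed, and arguably more cleanly: take $|S|=|T|=\lceil\kappa(G)/2\rceil$ and (for $v\notin S\cup T$) a fan from $v$ into $S\cup T$, which has at least $\kappa(G)$ vertices, yielding $\kappa(G)$ branches pairwise meeting only at $v$ with distinct endpoints; at least $\kappa(G)-|S|\ge\lfloor\kappa(G)/2\rfloor$ of them end in $T$ and likewise for $S$, so deleting any $K$ with $|K|\le\lfloor\kappa(G)/2\rfloor-1$ leaves an untouched branch into each side, and concatenating those two branches gives the required path with no rerouting and a slightly better constant. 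Either way, this counting-and-splicing step is the missing core of your argument, not a routine detail.
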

In both Theorem \ref{los} and \ref{gen}, the upper bound is proved by
showing that there are sets of $\kappa(G)+1$ vertices that cannot be identified.
The lower bounds which require the construction of paths separating large sets of
nodes in the graph,  is based on a well-known relationship between
$\kappa(G)$ and the existence of collections of vertex-disjoint paths
between certain sets of nodes in $G$.  In fact a much higher
lower bound can be proved for graphs with low connectivity (see
Theorem \ref{thm:kappa3} in Section \ref{imp}). The
result, which implies the aforementioned result for arbitrary LoS
networks, applies to many topologies studies in relation to
communication problems including various types of grids, butterflies,
hypercubes, and sensor networks.

Finally,
\margine{Result 3: random graphs}
we look at random networks (Erd\H{o}s-R\'enyi   and Random
Regular Graphs).  In these structures we are able to show a  trade-off between the
success probability of the relevant path construction processes  and
the size of the sets $S$ and $T$ defining the path set \pp.
Random graphs also give us alternative constructions of networks with large identifiability.

The rest of the paper is organized as follows. After
\margine{Organization of the paper}
a section devoted to preliminaries and  important definitions, we have a 
section that focuses on Theorem \ref{los}.1.
Section \ref{sec:vc} focuses on  arbitrary
graphs. First we look at the upper bound in Theorem \ref{los}.2. Then
an additional lower bound is proved for graphs with low connectivity,
which implies the lower bound in Theorem \ref{los}.2.  Finally 
Section \ref{sec:random} is dedicated to the analysis of the maximal
identifiability in random graphs. First we look at  Erd\H{o}s-R\'enyi
graphs, then random regular graphs.

\section{Preliminaries}
\label{sec:prel}

\paragraph{Sets, Graphs, Paths, and Connectivity.}
\margine{Sets, graphs, and paths}
If $U$ and $W$  are sets,
$ U\triangle W= (U \setminus W) \cup (W \setminus U)$ is the symmetric difference between $U$ and $W$.
Graphs (we will use the terms network and graph interchangeably)  in this paper will be undirected, simple and loop-less. 
A path  (of length $k$)  in a graph $G = (V,E)$  from  a node $u$
to a node $v$ is a sequence of nodes $p=u_1,u_2, \ldots, u_{k+1}$ such
that $u_1=u$, $u_{k+1}=v$ and $\{u_iu_{i+1}\} \in E$ for all
$i\in[k]$. The path $p$ is {\em simple} of no two $u_i$ and $u_j$ in
$p$ are the same. Any sub-sequence $u_x, \ldots, u_{x+y}$ ($x \in \{1,
\ldots,k+1\}$, $y \in \{0, \dots, k+1-x\}$) is said to be {\em
  contained in $p$}, and dually we say that $p$ {\em contains} the
sequence or {\em passes through} it. We say that path $p$ and $q$ {\em intersect} if
they contain a common sub-sequence. The intersection of a path $p$ and an arbitrary set of nodes $W$ is the set of elements of $W$ that are contained in $p$.
For a node $u$ in $G$, $N(u)$ is the set of {\em neighbourhood} of $u$,
i.e. $\{v \in V \;|\; \{u,v\}\in E\}$. The {\em degree} of $u$,  $\deg(u)$,
is the cardinality of $N(u)$, and let $\delta(G)=\min _{u\in V}\deg(u)$ be  the minimum degree of $G$.

In
\margine{Connectivity}
what follows $\kappa(G)$ denotes the vertex-connectivity of the
given graph $G = (V,E)$,  namely
$\kappa(G)$ is the size of the minimal subset $K$ of $V$,  such that removing $K$ 
from $G$ disconnects $G$. In particular it is well-known (see for
example \cite{Harary1969}, Theorem 5.1, pag 43) that
\begin{equation}
  \label{eq:2}
  \kappa(G) \leq \delta(G).
\end{equation}
It will also be convenient to work with sets of vertices disconnecting
particular parts of $G$. If $S, T \subseteq V$, then $\kappa_{ST}(G)$ is the size of the smallest vertex separator of $S$ and $T$ in $G$, i.e. the smallest set of vertices whose removal disconnects $S$ and $T$ (set $\kappa_{ST}(G) = \infty$ if
$S \cap T \neq \emptyset$ or there are $s \in S$ and $t \in T$ such that $\{s,t\} \in E$). Notice that $\kappa_{ST}(G)\geq \kappa(G)$.

\paragraph{Grids and LoS networks.}
For
positive integers $d$, and $n \geq 2$, let $\mathbb{Z}^d_{n}$ be the
$d$-dimensional cube  $\{1, \ldots, n\}^d$.
We say that distinct points $P_{1}$ and $P_{2}$ in a cube {\em share a line of
  sight} if their coordinates differ in a single place.
A graph $G=(V,E)$ is said
to be a {\em Line of Sight (LoS) network of size $n$, dimension $d$, and
range parameter $\omega$} if there exists an embedding $f_{G}:V \rightarrow
\mathbb{Z}^d_{n}$ such that $\{u,v\} \in E$ if and only if $f_{G}(u)$
and $f_{G}(v)$ share a line of sight and $|f_G(u)-f_{G}(v)| <\omega$. 
In the rest of the paper a LoS network $G$ is
always given along with some embedding $f_G$ in $\mathbb{Z}^d_{n}$ for some
$d$ and $n$, and with slight {\it abus de langage} we will often refer
to the vertices of $G$, $u, v  \in V$ in terms of their
corresponding points $f_G(u), f_G(v), \ldots$ in $\mathbb{Z}^d_{n}$,
and in fact the embedding $f_G$ will not be mentioned explicitly.
\begin{figure}[b]
\begin{center}
\includegraphics[scale=.4]{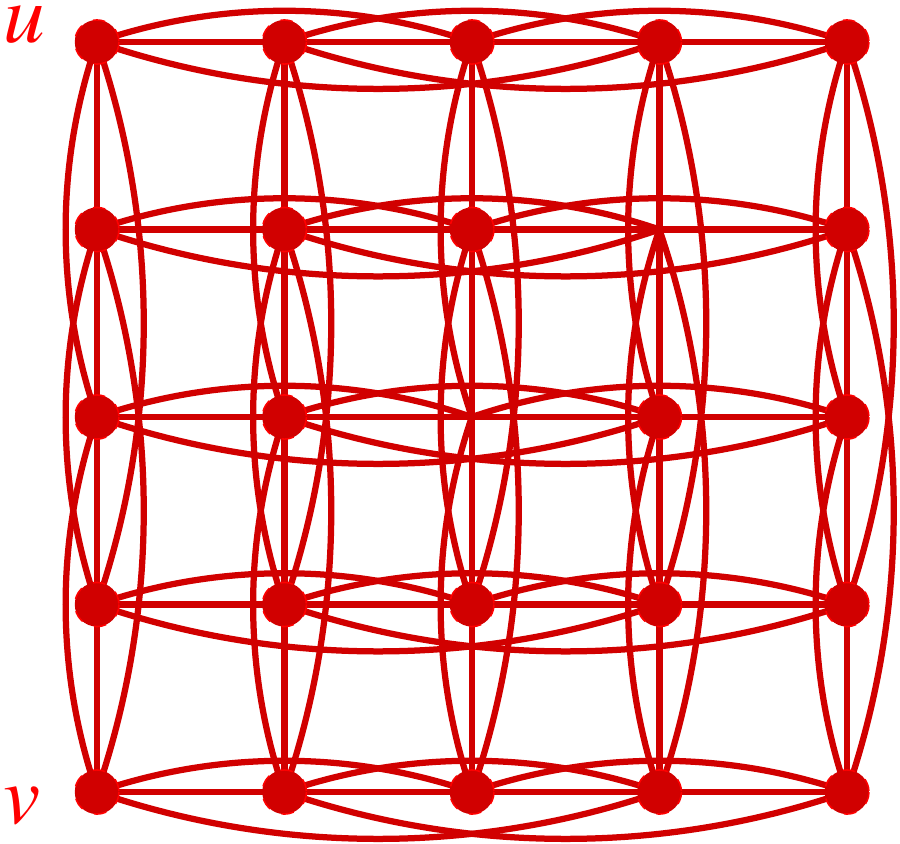}
\hspace{3cm}
\includegraphics[scale=.4]{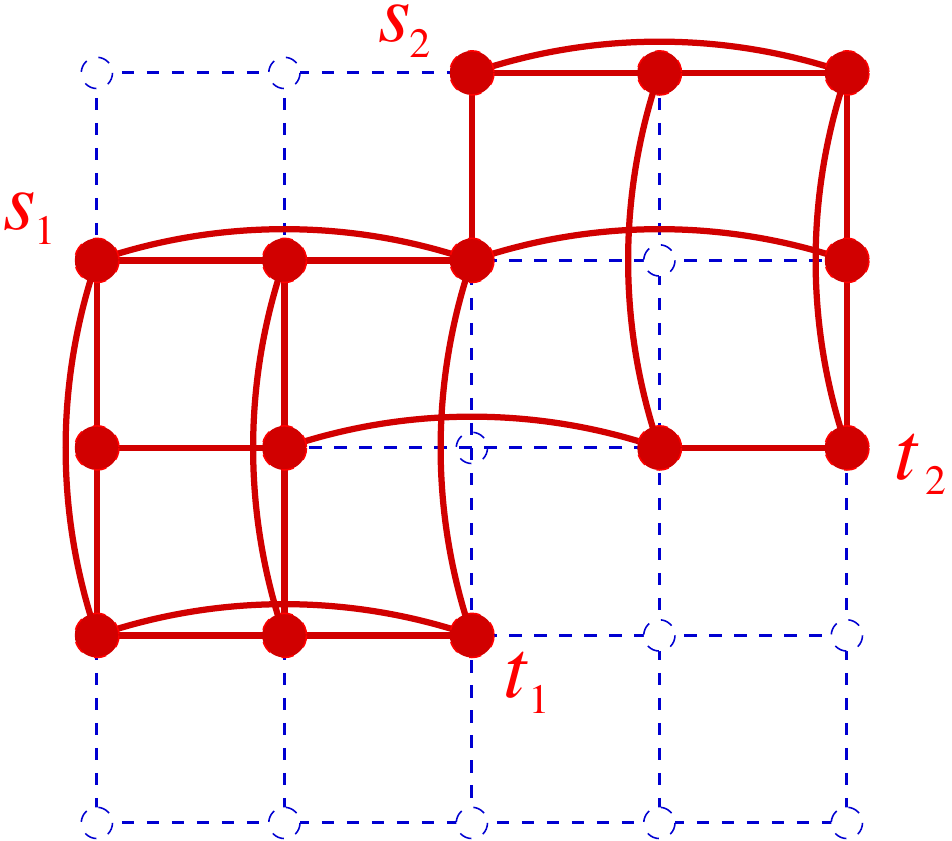}

(a) \hspace{5cm} (b)
  \caption{  \label{los-picture} On the left, the network
    $\calH_{n,\omega}$ for $n=5$ and $\omega=4$ (note that vertices
    $u$ and $v$ are not adjacent); on the right a more general
    example of  LoS  network, having $\omega = 3$, embedded in
    $\mathbb{Z}^2_{5}$ (represented as a dashed grid).}  
\end{center}
\end{figure}
Note that $d$-dimensional hypergrids, $\calH_{n,d}$, as defined in \cite{GR18} are particular LoS
networks with $\omega = 2$ and all possible $n^d$ vertices.
In the forthcoming sections we will 
study {\em augmented} hypergrids $\calH_{n,d,\omega}$ (or simply
$\calH_{n,\omega}$ in the 2-dimensional case), namely $d$-dimensional LoS networks
with range parameter $\omega>2$ containing all possible $n^d$ nodes.

\paragraph{Paths, Monitors and Identifiability.}
In BNT
\margine{Probing scheme}
one takes measurements along paths, and the quality of the
monitoring scheme depends on the choice of such paths.
Let $\pp$ be a set of paths over some node set $V$. For a node $v\in V$, let  $\pp(v)$ be the set of paths in \pp passing through $v$.  
For a set of nodes $U$, $\pp(U)=\bigcup _{u\in U}\pp(u)$. Hence if
$U\subseteq V$, $\pp(U)\subseteq \pp(V)$. Crucially, we
identify two disjoint sets of vertices $S$ and $T$, and  assume that  \pp
is the set of all simple paths in $G$ with one end-point in $S$ and
the other one in $T$. This is similar to the CSP probing scheme
analyzed in \cite{DBLP:journals/ton/MaHLST14}, but the scheme in that
paper does not assume $S \cap T = \emptyset$.

Traditionally
\margine{Monitors, where are they?}
in Network Tomography all measurements originate and end
at special {\em monitoring stations} that are connected to the structure under
observation. For any tomographic process to have any chance of
succeeding one has to assume that such {\em monitors} are
infallible. It is therefore customary to assume that the monitors are
external to the given network, but connected to it through a
designated set of nodes. $S \cup T$ is such set in our case. 
We call the pair $(S,T)$ a {\em monitor placement}.
In
\margine{Identifiability}
this settings, two sets of vertice $U$ and $W$ are {\em separable} if $\pp (U)\triangle  \pp (W) \not
=\emptyset$. 
A set of vertices $N$ is {\em $k$-identifiable} (with respect to the
probing scheme $(\pp,S,T)$) if and only if  
any  $U,W \subseteq N$, with $U\triangle W \not =\emptyset$ and
$|U|,|W|\leq k$,  $U$ are separable. The \emph{maximal identifiability} of $N$ with respect to 
$(\pp,S,T)$, $\mu(N,\pp,S,T)$,  is the largest $k$ such that $N$ is
$k$-identifiable.
For a graph $G=(V,E)$,
we write $\mu(G, \pp,S,T)$ to indicate the maximal identifiability of the set of nodes in $V$ which are used in
at least in a  path of $\pp$. In what follow we usually omit the
dependency of $\mu$ on the probing scheme $(\pp,S,T)$ when this is
clear from the context.

\smallskip

Note
\margine{Proof techniques}
that $k$-identifiability is monotone: if $G$ is
$k$-identifiable then it is  $k'$-identifiable for any
$k'<k$.
This implies that to prove that $\mu(N)\leq k-1$ it is sufficient to show that
$N$ is {\bf not} $k$-identifiable. By the definition given above this
boils down to showing the existence of two distinct node sets $U$ and
$W$ in $N$ of cardinality at most $k$ that are not separable.

Conversely, if we want to prove that $\mu(N)\geq k$ for some $k$, then
it is enough to argue that all distinct node sets $U$ and $W$ of
cardinality $|U|,|W| \le k$ are separable.
To prove this we have to show that for any two  distinct
node sets $U$ and $W$ of cardinality at most $k$ there exists
a path in \pp intersecting exactly one between $U$
and $W$.

\section{Failure Identifiability  in Augmented Hypergrids}
\label{sec:LoS}

Let $\omega >2$ be an integer. In this section we analyze the maximal
identifiability of augmented hypergrids. 
To maximize clarity, we provide full details for the special case of
$\calH_{n, \omega}$, the 2-dimensional augmented hypergrid,  and then
state the general result, leaving its proof to the Appendix.
In \cite{GR18}
\margine{Upper bound for $d=2$}
two of us showed that 
$\mu(G)\leq \delta(G)$ for any $(\pp,S,T)$.
In $\calH_{n, \omega}$  each node $u$ has $\w-1$ edges for each
one of the possible directions (north, south, east, west). Hence the minimal 
degree in $\calH_{n, \omega}$ is reached at the corner nodes and it is $2(\w-1)$. 
Thus $\mu(\calH_{n, \omega})\leq 2(\w-1)$ for any $(\pp,S,T)$. In the remainder of this section
we pair this up with a tight lower bound for a specific monitor placement.
Note that these results readily imply Theorem \ref{los}.1 as in
augmented hypegrids the vertex connectivity is actually equal to the
network's minimum degree.

We
\margine{Lower bound for $d=2$}
say that  nodes with coordinates  $(1,j)$ in $\calH_{n, \omega}$, for
some $j \in \{1, \ldots, n\}$,  are in the \emph{north
  border} of  $\calH_{n, \omega}$. Analogously we can define \emph{south, west and east} borders of 
$\calH_{n, \omega}$.
A \emph{canonical monitor placement} for  $\calH_{n, \omega}$ is a
pair $(S,T)$, such that
$|S|=|T|= 2\w-1$ and nodes in $S$ are chosen among the west and north borders of  $\calH_{n, \omega}$ and  $T$ among the the south and east borders of  $\calH_{n, \omega}$.

Given a node $u$ of $\calH_{n, \omega}$, identified as a pair $(i,j)\in \mathbb{Z}^2_{n}$, we define: 
\[SE(u)=\{(l,k): l\geq i \vee k\geq j\} \quad \mbox{and}\quad NW(u)=\{(l,k): l\leq i \vee k\leq j\}.\]
We are now ready to state the main result of this section.
\begin{theorem}
\label{full-los-2}
Let $n,\omega \in \mathbb N$, $n\geq 2$ and $\omega>2$. Let $(S,T)$ be the canonical monitor placement for $\calH_{n, \omega}$. Then $\mu(\calH_{n, \omega}) \geq 2(\w-1)-1$.

\end{theorem}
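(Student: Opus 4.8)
The plan is to show $\calH_{n,\omega}$ is $k$-identifiable with respect to the canonical probing scheme for $k = 2(\omega-1)-1$, i.e.\ that any two distinct node sets $U, W$ with $|U|,|W| \le k$ are separated by some path of \pp. Following the proof-technique remark in the preliminaries, it suffices to produce, for every such pair, a simple $S$--$T$ path that intersects exactly one of $U$, $W$. Without loss of generality pick a vertex $z \in U \triangle W$, say $z \in U \setminus W$; the goal becomes constructing a path $p \in \pp$ through $z$ that avoids all of the at most $2k$ vertices in $(U \cup W)\setminus\{z\}$ (in fact, avoiding $W \setminus \{z\}$ suffices, but avoiding the whole set is cleaner and still within budget). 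Since $|(U\cup W)\setminus\{z\}| \le 2k - 1 = 4\omega - 5$ is still far smaller than $n$ for large $n$, there is plenty of room; the real constraint is the local one near $z$ and near the borders.

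The key geometric idea is the $NW$/$SE$ decomposition introduced just before the statement. A canonical path enters from an $S$-node on the west/north border, travels monotonically ``south-east'' through $z$, and exits at a $T$-node on the south/east border. Concretely I would build the path in two halves: an $NW$-half from some $s\in S$ down to $z$ lying entirely in $NW(z)$, and an $SE$-half from $z$ out to some $t \in T$ lying entirely in $SE(z)$; because $NW(z) \cap SE(z)$ is small (essentially the row and column of $z$), the two halves can be made internally disjoint apart from $z$, so their concatenation is a simple path. Within each half, at every step one has $\omega - 1$ choices of how far to ``jump'' in each of the two relevant coordinate directions; this is exactly where the bound $2(\omega-1)$ enters. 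The claim to nail down is a routing lemma: given $z$ and a forbidden set $B$ of size at most $2(\omega-1)-2$ (the others, minus one slack for the parity/border loss), one can always choose jump lengths so as to thread a monotone path through $z$ from the appropriate border to the appropriate border while dodging $B$. The counting is: the forbidden vertices can block directions, but with $\omega-1$ alternatives per coordinate and two coordinates, a set of fewer than $2(\omega-1)$ obstacles cannot block every escape route — an averaging/pigeonhole argument, done separately for the two halves so that each half only needs to dodge its share of $B$.

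I would organize this as: (i) reduce to the single-vertex routing problem as above; (ii) state and prove the routing lemma for the $SE$-half — show that from $z$ one can reach the south or east border by a monotone simple path in $SE(z)$ avoiding any prescribed set of at most $2(\omega-1)-2$ vertices, and moreover ending at a $T$-node of the canonical placement; (iii) symmetrically handle the $NW$-half ending at an $S$-node; (iv) check the two halves glue to a simple path in \pp and that, by construction, it meets $U \triangle W$ only in $z$. The endpoint bookkeeping — guaranteeing the path actually terminates on the designated $2\omega-1$ monitor nodes rather than an arbitrary border vertex — is where the ``$-1$'' slack in $2(\omega-1)-1$ gets spent, since near a corner the available monitors and the available jump lengths interact.

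The main obstacle I anticipate is the routing lemma near the borders and corners: in the interior the pigeonhole over $\omega-1$ jump lengths per direction is clean, but once the path is within distance $\omega$ of a border the number of live options shrinks, the two coordinate directions stop being symmetric, and one must be careful that the forced moves do not collide with an obstacle or overshoot the monitor set $S$ (resp.\ $T$). Handling $z$ itself lying on or near a border is the worst case and likely forces the single unit of loss in the bound; making that case analysis tight (rather than losing a larger additive constant) is the delicate part, and is presumably why the authors restrict to the canonical placement, whose monitors are spread across two full borders precisely to keep an escape route open.
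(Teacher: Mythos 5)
Your plan is, at its core, the paper's own argument: pick $z\in U\triangle W$, build a monotone $NW$-half and $SE$-half through $z$, and use a pigeonhole over the $\omega-1$ jump lengths in each of the two available directions to show that a small obstacle set cannot block every escape. The paper formalizes exactly this via the notion of a \emph{$W$-saturated} direction (a direction is blocked only if the $\omega-1$ nodes immediately following $z$ in that direction all lie in $W$), and runs a single induction on $|NW(z)|$ (resp.\ $|SE(z)|$): at each step, blocking \emph{both} available directions would require $2(\omega-1)$ distinct vertices of $W$, which is impossible since $|W|\le 2(\omega-1)-1$. The endpoint bookkeeping you worry about is dispatched in one line: $|S|=2\omega-1>2(\omega-1)-1\ge |W|$, so some monitor $s\in S\setminus W$ exists, WLOG the corner $(1,1)$, and the monotone walk terminates there as the base case of the induction.

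Two concrete miscounts in your plan would derail the execution. First, your ``cleaner'' variant of avoiding all of $(U\cup W)\setminus\{z\}$ is \emph{not} within budget: that set can have $4\omega-6\ge 2(\omega-1)$ vertices, which is enough to saturate both directions at some step, so the pigeonhole fails. You must avoid only $W$ (note the path may freely hit other vertices of $U$; separability only requires $\pp(U)\triangle\pp(W)\neq\emptyset$). Second, your routing lemma is stated for $|B|\le 2(\omega-1)-2$ with the forbidden set ``split between the two halves''; this is both too weak by one (yielding only $\mu\ge 2(\omega-1)-2$) and structurally wrong, since all of $W$ may lie in one of the two regions. The correct statement takes the full $W$ with $|W|\le 2(\omega-1)-1$ and applies the saturation pigeonhole \emph{locally at every step} of each half against the entire set $W$; no splitting is needed, because the $2(\omega-1)$ positions that would have to be occupied to trap the walk are specific to the current node and a set of size $2(\omega-1)-1$ can never fill all of them.
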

\begin{proof}
We have to prove that given two node sets $U$, and $W$ of cardinality at most $2(\w-1)-1$, with $U\triangle W \not =\emptyset$ we can build an $S$-$T$ path touching exactly one of them.  Given a node $u \in U \setminus W$, let $S(u)=NW(u)$, the nodes in the North-West region of $u$ and let 
$T(u)=SE(u)$, the nodes in South-East region of $u$.
Notice that $(1,1) \in S(u)$ and $(n,n)\in T(u)$ and $S(u) \cap T(u) =\emptyset$.  Since $|S|>2(\w-1)$ and $|W|\leq 2(\w-1)$, there is a node in $s \in S\setminus W$. Assume that $s=(1,1)$ (if $s\not =(1,1)$ is similar and give even better results). Similarly for $T$, assume that $(n,n) \not \in W$.  
Consider the following definition:

\begin{definition} 
Given a node $u \in \calH_{n, \omega}$, and a set of nodes $W$ in
$\calH_{n, \omega}$, 
we say that a direction $X$ (north, south, west, east)  is {\em $W$-saturated} on $u$ 
 if moving from $u$ on direction $X$ there is, right after $u$,  a consecutive block of $\w-1$ nodes in $W$.
\end{definition}

The following  claim define two disjoint paths $i_u$ in $S(u)$ from
$s$ to $u$  and $o_u$ in $T(u)$ form $u$ to $t$ not touching
$W$. Their concatenation hence defines a path joining $S$ to $T$
passing through $u$ and not touching $W$.
\qed\end{proof}

\begin{claim}
Let $u\in S(u)$. There is a path $i_u$ in $S(u)$ from $(1,1)$ to $u$ not touching $W$. There is a path $o_u$ in $T(u)$ from $u$ to $t$ not touching $W$.
\end{claim}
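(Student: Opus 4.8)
\emph{Plan of proof.}

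The two paths $i_u$ and $o_u$ are mirror images of one another. The central reflection $\rho$ of $\calH_{n,\w}$ sending a vertex $(l,k)$ to $(n{+}1{-}l,\,n{+}1{-}k)$ maps $(1,1)$ to $(n,n)$, maps $NW(u)$ onto $SE(\rho(u))$, and carries $W$ to a set of the same size avoiding $\rho(u)$ (and with $(n,n)\notin W$ going over to $(1,1)\notin\rho(W)$). Hence producing, for every admissible instance, a simple path in $S(u)=NW(u)$ from $(1,1)$ to $u=(i,j)$ that avoids $W$ automatically produces $o_u$ as $\rho$ applied to the corresponding $i$-path for the reflected instance. So the plan is to build $i_u$, using only that $u\notin W$ (since $u\in U\setminus W$), that $(1,1)\notin W$, and that $|W|\le 2(\w-1)-1$. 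Note that $NW(u)$ contains both the horizontal slab $[1,n]\times[1,j]$ and the vertical slab $[1,i]\times[1,n]$, and the extra room in these ``wings'' beyond the rectangle $[1,i]\times[1,j]$ is what we exploit.

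The conceptual reason the path exists is that $2(\w-1)$ is exactly the vertex-connectivity of a full augmented (sub)grid region, so a set $W$ of size $<2(\w-1)$ cannot separate $(1,1)$ from $u$ there; the $W$-saturation notion turns this into an explicit construction. I would walk greedily from $(1,1)$ towards $u$: at a current vertex $v=(a,b)\ne u$ we perform a \emph{forward move} along a single edge of $\calH_{n,\w}$ that strictly decreases $|i-a|+|j-b|$ and stays in $NW(u)$, jumping to the farthest $W$-free vertex reachable in the chosen direction. When $a<i$, the eastward forward move is blocked only if all of $(a{+}1,b),\dots,(\min(a{+}\w{-}1,i),b)$ lie in $W$; if $i-a\ge\w-1$ this says precisely that east is $W$-saturated on $v$, and otherwise it forces $(i,b)\in W$ (hence $b<j$, since $u\notin W$). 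The analogous dichotomy holds for the north move, and for the west/south moves needed once one has turned the corner into a wing of the L-shape. The crucial observation is that two perpendicular $W$-saturated blocks at the same vertex $v$ are pairwise disjoint and disjoint from $v$, so together they contain $2(\w-1)>|W|$ vertices of $W$ --- impossible; hence at a vertex where both available forward moves are blocked, at least one blocking is of the ``boundary'' type, and then the remaining $W$-budget is small enough that a short detour (one hop in a currently free perpendicular direction, possibly into a wing) reaches a vertex from which a forward move succeeds. Since every forward move strictly decreases the $\ell_1$-distance to $u$ and the whole walk stays in the finite region $NW(u)$ moving monotonically towards $u$, the process terminates at $u$ and the produced vertex sequence is simple. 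The thin-region base cases --- $i$ or $j$ small, in particular $u$ adjacent to $(1,1)$ --- are dispatched directly: a free row or column lane avoiding all of $W$ exists by counting against $|W|\le 2(\w-1)-1$, and $u$ is then reached in a bounded number of hops. Concatenating $i_u$ and $o_u$ then yields the desired $S$--$T$ path through $u$ avoiding $W$, as claimed in the proof of Theorem~\ref{full-los-2}.

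The main obstacle is the bookkeeping of the blocked case: specifying the detour so that it remains inside $NW(u)$, never revisits a vertex, and still guarantees eventual arrival at $u$, together with a clean treatment of the thin-region cases. The counting heart of the matter --- $|W|<2(\w-1)$, the vertex-connectivity of a full augmented grid region, so $W$ cannot disconnect $(1,1)$ from $u$ --- is immediate; converting non-separation into an explicit simple path confined to $NW(u)$, via the saturation dichotomy, is where the effort goes.
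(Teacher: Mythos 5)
Your proposal shares the paper's counting core --- two perpendicular $W$-saturated blocks at one vertex are disjoint and would force $|W|\ge 2(\w-1)$, contradicting $|W|\le 2(\w-1)-1$ --- but it runs the construction in the opposite direction, and that choice is where the gap opens. The paper builds $i_u$ by induction on $|S(u)|$, \emph{backwards from $u$}: since north and west cannot both be $W$-saturated at $u$, there is a $W$-free vertex $u'$ adjacent to $u$ at distance $<\w$ in an unsaturated direction, $S(u')\subsetneq S(u)$, and the inductive hypothesis supplies $i_{u'}$; termination and simplicity come for free from the strictly shrinking region. Your forward greedy walk from $(1,1)$ towards $u$ instead creates exactly the cases the backward induction never sees: once the remaining distance to $u$ in some coordinate drops below $\w-1$, a blocked forward move is no longer equivalent to saturation (it can be caused by a single vertex of $W$ sitting on the target row or column), so the counting argument does not apply and you must invoke a ``detour.'' You assert that one perpendicular hop always reaches a vertex from which a forward move succeeds, that the walk never revisits a vertex, and that it terminates at $u$ --- but none of this is argued, the $\ell_1$-distance to $u$ is not monotone once detours occur, and you yourself name this bookkeeping as ``where the effort goes.'' As written, the proposal is a plan whose hardest step is left open, and that step is precisely the one the paper's choice of induction direction eliminates.

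Two smaller points. First, your reflection $\rho$ reducing $o_u$ to $i_u$ is a harmless formalization of the paper's ``we prove the first one since they are the same.'' Second, your appeal to the ``wings'' of an L-shaped $NW(u)$ and to vertex-connectivity of the full region is a genuinely different (and in thin regions, necessary) idea, but note that it is in tension with the concatenation step of Theorem~\ref{full-los-2}, which needs $S(u)\cap T(u)=\emptyset$; if $NW(u)$ is read as the L-shaped union your $i_u$ and $o_u$ may intersect away from $u$, and the claimed simple $S$--$T$ path does not follow by mere concatenation. If you want to salvage the forward approach, you would need to either restrict to the rectangle $[1,i]\times[1,j]$ and prove the detour lemma there, or adopt the paper's backward induction, which settles the matter in three lines.
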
 

\begin{proof}
We prove the first one since they are the same.
By induction on $S(u)$. If $|S(u)|=1$, then $u=s$ and we have done. If $|S(u)|>1$. Since $|W| \leq 2(\w-1)-1$, and since a direction is $W$-saturated only if a block of  $\w-1$ consecutive elements of  $W$  appear after $u$ on that direction, then there is at a least a direction $X$ between North and West  which is not $W$-saturated. Hence there is a node $u'\in S(u) \setminus W$ on direction $X$ from $u$ at distance less than $\w$. Hence there is an  edge  $\{u',u\} \in \calH_{n, \omega}$.  Since $S(u') \subset S(u)$ the inductive hypothesis give us a path $i_{u'}$ as required. Hence the path $i_u=i_{u'},u$ is as required.
\qed\end{proof}

Theorem
\margine{Extension to $d>2$}
\ref{full-los-2}  easily generalizes to $d$-dimensional augmented hypergrids.

\begin{theorem}
  \label{full-los-d}
Let $d,n,\omega \in \mathbb N$, $d,n\geq 2$ and $\omega>2$. Let
$(S,T)$ be the canonical monitor placement for $\calH_{n,
  d,\omega}$. Then $\mu(\calH_{n, d,\omega}) \geq d (\w-1)-1$.
\end{theorem}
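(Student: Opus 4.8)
The plan is to lift the two-dimensional argument to $d$ dimensions by running it coordinate-wise, using the same template: given node sets $U,W$ with $|U|,|W|\le d(\w-1)-1$ and $U\triangle W\neq\emptyset$, pick $u\in U\setminus W$ (WLOG), then build an $S$-$T$ path through $u$ avoiding $W$ by splitting it into an ``incoming'' leg from a monitor in $S\setminus W$ to $u$ and an ``outgoing'' leg from $u$ to a monitor in $T\setminus W$. I would first set up the $d$-dimensional analogues of the regions: for $u=(i_1,\dots,i_d)$, let $NW(u)=\{(l_1,\dots,l_d): l_k\le i_k \text{ for some } k\}$ and $SE(u)=\{(l_1,\dots,l_d): l_k\ge i_k \text{ for some } k\}$, so that the all-ones corner lies in $NW(u)$, the all-$n$ corner lies in $SE(u)$, and $NW(u)\cap SE(u)=\emptyset$. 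The canonical monitor placement should be defined so that $|S|=|T|=d(\w-1)$ with $S$ drawn from the ``lower'' borders (those coordinate hyperplanes $x_k=1$) and $T$ from the ``upper'' borders ($x_k=n$); since $|S|>|W|$ and $|T|>|W|$, there exist monitors $s\in S\setminus W$ and $t\in T\setminus W$, and by the same reasoning as in the $d=2$ case one may assume $s$ is the all-ones corner and $t$ the all-$n$ corner (any other choice only helps).

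The combinatorial heart is the $d$-dimensional version of the Claim: there is a path $i_u$ inside $NW(u)$ from the all-ones corner to $u$ avoiding $W$, and symmetrically a path $o_u$ inside $SE(u)$ from $u$ to the all-$n$ corner avoiding $W$; concatenating gives the desired $S$-$T$ path through $u$ missing $W$. I would prove this by induction on $|NW(u)|$ (equivalently, on $\sum_k i_k$). The key counting step generalizes the notion of a $W$-saturated direction: there are now $d$ ``decreasing'' directions available at $u$ (decrease coordinate $k$, for $k=1,\dots,d$), and a direction is $W$-saturated if the block of $\w-1$ cells immediately after $u$ in that direction all lie in $W$. Each saturated direction consumes $\w-1$ distinct elements of $W$, and these blocks are pairwise disjoint (they lie along different coordinate axes emanating from $u$, meeting only at $u\notin W$), so at most $\lfloor |W|/(\w-1)\rfloor \le d-1$ directions can be saturated. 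Hence at least one decreasing direction $X$ is unsaturated, giving a neighbour $u'\in NW(u)\setminus W$ at distance $<\w$ along $X$ with $NW(u')\subsetneq NW(u)$; the induction hypothesis supplies $i_{u'}$, and $i_u=i_{u'},u$ works. The base case $|NW(u)|=1$ forces $u$ to be the all-ones corner, where the path is trivial.

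The main obstacle I anticipate is bookkeeping rather than conceptual: making the disjointness of the saturated blocks airtight (one must check that a block heading ``down'' in coordinate $k$ and one heading ``down'' in coordinate $k'\neq k$ share only the point $u$, and that no block escapes the cube in a way that breaks the count), and correctly specifying the canonical monitor placement so that the ``assume $s$ is the extreme corner'' reduction really is WLOG in every dimension — the subtlety is that a monitor lying on the border hyperplane $x_k=1$ but not at the all-ones corner still sits in $NW(u)$ and still admits an incoming path by the same inductive argument, so the reduction is sound, but this needs a sentence. The threshold $d(\w-1)-1$ is exactly what the counting forces: with $|W|\le d(\w-1)-1$ we get strictly fewer than $d$ saturated directions, leaving at least one free, which is precisely the induction's requirement, and it matches the minimum-degree upper bound $d(\w-1)$ for $\calH_{n,d,\omega}$ up to the additive $1$, so the bound is essentially tight.
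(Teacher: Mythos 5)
Your proposal follows essentially the same route as the paper's appendix proof: the same $NW/SE$ regions, the same notion of $W$-saturated decreasing directions with the counting bound $\lfloor |W|/(\w-1)\rfloor \le d-1$ guaranteeing an unsaturated direction, and the same induction on $|NW(u)|$ building the two legs $i_u$ and $o_u$ that are concatenated at $u$. The only immaterial difference is that you take $|S|=|T|=d(\w-1)$ where the paper's canonical placement uses $d\w-1$ monitors per side; both exceed $d(\w-1)-1\ge |W|$, which is all the argument requires.
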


\section{General Topologies}
\label{sec:vc}

We now look at the maximal identifiability in arbitrary
networks. Theorem \ref{gen} stated in Section \ref{intro}
will be a consequence of two independent results.
In
\margine{Upper Bound}
\cite{GR18} it was proved that   $\mu(G)\leq \delta(G)$, for any
monitor placement $(S,T)$. Here we show that $\mu(G)$ can be upper
bound in terms of 
$\kappa_{ST}$, the size of the minimal node set separating $S$
from $T$.

\begin{theorem}
\label{thm:ub}
Let $G=(V,E)$ be a graph and $(S,T)$ be a monitor placement. Then $\mu(G)  \leq \kappa_{ST}(G)$.
\end{theorem}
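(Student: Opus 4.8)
The plan is to show that if $N$ is a set of nodes with $|N|$ large enough, then there exist two distinct subsets $U, W \subseteq N$ of size at most $\kappa_{ST}(G)+1$ that are \emph{not} separable, i.e.\ $\pp(U) = \pp(W)$; by monotonicity of identifiability this yields $\mu(G) \le \kappa_{ST}(G)$. The natural object to exploit is a minimum $S$--$T$ vertex separator $K$ with $|K| = \kappa_{ST}(G)$. The key structural fact is that \emph{every} path in $\pp$ — being a simple path from $S$ to $T$ — must pass through $K$, since $K$ disconnects $S$ from $T$. So for any node set $A$, the set of paths $\pp(A)$ is entirely controlled by how the paths through $A$ meet $K$.

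First I would make this precise: for a path $p \in \pp$ let $\phi(p) \subseteq K$ be the (nonempty) set of separator vertices lying on $p$. Then $\pp(A) = \{\, p \in \pp : p \text{ touches } A \,\}$, and in particular $\pp(K) = \pp$ (every $S$--$T$ path hits $K$). Now take any two vertices $k_1 \ne k_2$ and consider the two sets $U = K$ and $W = K \cup \{x\}$ for some well-chosen $x \notin K$ — or more carefully, pick $U = K$ and look for a node $x$ such that \emph{every} path through $x$ already passes through $K$ (which is automatic) \emph{and} such that adding $x$ to $K$ does not enlarge the path collection, i.e.\ $\pp(\{x\}) \subseteq \pp(K) = \pp$. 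Since $\pp(K) = \pp$ is the full path set, we in fact have $\pp(K) = \pp(K \cup \{x\}) = \pp$ for \emph{any} node $x$ that lies on at least one path of $\pp$. Thus $U = K$ and $W = K \cup \{x\}$ are non-separable sets, both of size at most $\kappa_{ST}(G)+1$, and they are distinct as long as $x \notin K$ and $x$ is used by some path. This shows $N$ (the set of path-touched vertices) is not $(\kappa_{ST}(G)+1)$-identifiable, hence $\mu(G) \le \kappa_{ST}(G)$.

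The one gap to close is the degenerate case where \emph{every} path-touched vertex already lies in $K$, so that no such $x$ exists; but then the vertex set under consideration has size $\le \kappa_{ST}(G)$, and the claimed bound $\mu(G) \le \kappa_{ST}(G)$ holds trivially (a set of $m$ vertices cannot be $(m+1)$-identifiable, or even is vacuously bounded). I would also double-check the boundary convention: if $S \cap T \ne \emptyset$ or an edge joins $S$ to $T$, then $\kappa_{ST}(G) = \infty$ and the inequality is vacuous, so we may assume $\kappa_{ST}(G)$ is finite and a genuine separator $K$ exists. I expect the main (very mild) obstacle to be purely bookkeeping — ensuring the two witness sets $U$ and $W$ are genuinely distinct subsets of the relevant node set $N$ and that both have cardinality at most $\kappa_{ST}(G)+1$ — rather than anything structurally deep, since the heart of the argument is just the observation that a minimum separator lies on every feasible measurement path and therefore "absorbs" any extra vertex without changing the path signature.
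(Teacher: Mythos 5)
Your proposal is correct and follows essentially the same route as the paper: take a minimum $S$--$T$ separator $K$, observe that every path of $\pp$ must meet $K$ (so $\pp(K)=\pp(K\cup\{x\})$ for any extra vertex $x$), and conclude that $U=K$ and $W=K\cup\{x\}$ are distinct non-separable sets of size at most $\kappa_{ST}(G)+1$. The paper chooses the extra vertex as a neighbour of $K$ while you take any path-touched vertex outside $K$ (and you also dispose of the degenerate cases explicitly), but the argument is the same.
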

\begin{proof}
  If there is no vertex set in $G$ separating $S$ and $T$, 
  $\kappa_{ST}(G) = \infty$ and the result is trivial.
Let $K$ be the set witnessing the minimal separability of $S$ from $T$ in $G$. 
Hence $|K|=\kappa_{ST}(G)$. Let $N(K)$ be the set of nodes
neighbours of nodes in $K$ and notice this cannot be empty since $K$ is disconnecting $G$. Pick one $w\in N(K)$ and 
define $U:=K$ and $W:=U\cup\{w\}$. Clearly $\pp(U) \subseteq \pp(W)$. To see the opposite inclusion assume that there exists a 
path from $S$ to $T$ passing from $w$ but not touching $U=K$. Then $K$ is not separating $S$ from $T$ in $G$. Contradiction.
\qed\end{proof} 

Note that, while in general $\kappa_{ST}(G)$ may be larger than
$\delta(G)$, if $S$ and $T$ are separated by a set of $\kappa(G)$
vertices then, by inequality (\ref{eq:2}), the bound in Theorem
\ref{thm:ub} is at least as good as the minimum degree bound proved
earlier by the first two authors \cite{GR18}. This implies the upper
bound in Theorem \ref{gen}.

\medskip

Moving
\margine{Lower bound}
to lower bounds, in this section we prove the following:

\begin{theorem}
\label{thm:lb:gen}
Let $G=(V,E)$ and $(S,T)$ be a monitor placement for $G$. Then $\mu(G) \geq \min(\kappa(G),|S|,|T|)-2$.
 \end{theorem}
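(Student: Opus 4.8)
The plan is to prove $\mu(G) \ge k$ for $k = \min(\kappa(G),|S|,|T|)-2$ directly from the definition: given two distinct node sets $U, W \subseteq V$ with $|U|,|W| \le k$ and $U \triangle W \neq \emptyset$, I must exhibit a path in $\pp$ (a simple $S$-$T$ path) that intersects exactly one of $U$ and $W$. Without loss of generality pick a vertex $u \in U \setminus W$; the goal is then to build a simple $S$-$T$ path through $u$ that avoids $W$ entirely (such a path touches $U$ at $u$ and misses $W$, hence separates them).

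The key tool is Menger's theorem in its ``fan'' form (Dirac's fan lemma): if $\kappa(G) \ge m$, then for any vertex $u$ and any set $A \subseteq V \setminus \{u\}$ with $|A| \ge m$, there exist $m$ paths from $u$ to $A$ that are pairwise vertex-disjoint except at $u$ and each meet $A$ only in its endpoint. First I would apply this with $A = S$: since $|S| \ge k+2$ and $\kappa(G) \ge k+2$, I get $k+2$ internally-disjoint $u$-$S$ paths. Since $|W| \le k$ and $u \notin W$, at most $k$ of these paths can be hit by $W$, so at least two survive untouched by $W$; pick one, call it $i_u$, ending at some $s \in S$. Symmetrically, apply the fan lemma with $A = T$ to obtain $k+2$ internally-disjoint $u$-$T$ paths, of which at least two avoid $W$; I need one, $o_u$, that additionally is internally disjoint from $i_u$ and whose endpoint $t \in T$ is not already on $i_u$. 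Because $i_u$ uses only $k+2$ vertices other than $u$, and there are $k+2$ disjoint choices for $o_u$ among the fan to $T$, a counting/pigeonhole argument over the (at most $k$) vertices of $W$ and the vertices of $i_u$ lets me select an $o_u$ that is clean. Concatenating $i_u$ and $o_u$ at $u$ yields a (not necessarily simple, a priori) $S$-$T$ walk through $u$ avoiding $W$; shortcutting any repeated vertices gives a simple $S$-$T$ path in $\pp$, still through $u$ — note $u$ is never shortcut away since the two halves meet only at $u$ — and still avoiding $W$.

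I expect the main obstacle to be the bookkeeping in the ``joining'' step: guaranteeing that the inbound path $i_u$ and the outbound path $o_u$ can be chosen simultaneously so that their union is a valid simple path avoiding $W$, while also handling the edge cases where $S$ and $T$ are not disjoint from the fans' internal vertices, or where some vertex of $S \cup T$ lies in $W$. This is exactly where the $-2$ slack in the bound is consumed: one unit to have a $W$-free path on each side, and the interaction between the two sides. A clean way to organize it is to first delete $W$ from $G$ (legitimate since $u \notin W$) — but one must check connectivity is not destroyed below the needed threshold; since $|W| \le k = \kappa(G)-2$, the graph $G - W$ still has a $u$-$S$ connection and a $u$-$T$ connection of sufficient multiplicity, though $G-W$ itself may drop in global connectivity, so it is cleaner to keep $W$ in the graph and argue via the fan lemma plus pigeonhole as above rather than by deletion. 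I would also remark that the hypothesis $U \triangle W \ne \emptyset$ is used only to guarantee the starting vertex $u$ exists (by symmetry of $U$ and $W$), and that the bound is stated with $|S|,|T|$ because the fan lemma needs the target set to be large enough to absorb both $W$ and the internal vertices of the other fan.
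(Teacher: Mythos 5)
Your overall strategy --- fix $u \in U \setminus W$ and build a simple $S$--$T$ path through $u$ that avoids $W$, using Menger's theorem and the observation that $|W|\le \kappa(G)-2$ leaves at least two of the disjoint paths untouched --- is the same as the paper's. The gap is in the joining step. You justify the choice of an outbound path $o_u$ internally disjoint from the inbound path $i_u$ by asserting that ``$i_u$ uses only $k+2$ vertices other than $u$''; this is false. The quantity $k+2$ bounds the \emph{number of paths} in the fan, not the length of any one of them: $i_u$ is a single simple path in $G$ and may contain up to $n-1$ vertices. Hence $i_u$ can intersect every one of the $k+2$ internally disjoint paths of the $T$-fan, and the pigeonhole selection of a ``clean'' $o_u$ fails. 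The fallback of shortcutting repeated vertices does not rescue the argument either: if $i_u$ and $o_u$ share a vertex $x \neq u$, shortcutting the walk $s \to u \to t$ at $x$ deletes the sub-walk between the two occurrences of $x$, and that sub-walk contains $u$; the resulting simple path need not pass through $u$, hence need not touch $U$ at all. (Your own parenthetical ``$u$ is never shortcut away since the two halves meet only at $u$'' presupposes exactly the disjointness that the pigeonhole was supposed to deliver.)

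This intersection problem is precisely what the paper's proof spends most of its effort on, and it is resolved not by choosing disjoint halves (impossible in general) but by surgery on intersecting ones. The paper takes one $W$-free path $\pi^s_1$ from $s$ to $u$ and \emph{two} vertex-disjoint $W$-free paths $\pi^t_1,\pi^t_2$ from $u$ to $t$; if no clean concatenation exists, it walks along $\pi^s_1$ from $s$ up to the first vertex $z$ lying on $\pi^t_1\cup\pi^t_2$ (say on $\pi^t_j$), then follows $\pi^t_j$ backwards from $z$ to $u$, then leaves $u$ for $t$ along the other path $\pi^t_{3-j}$. The three pieces are pairwise internally disjoint by the choice of $z$ and the disjointness of $\pi^t_1,\pi^t_2$, so their concatenation is a simple $S$--$T$ path through $u$ avoiding $W$. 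Your fan construction does provide the raw material for this (two surviving $W$-free paths on each side), so the proof can be completed along your lines, but the rerouting step is the essential missing idea rather than a bookkeeping detail.
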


The lower bound in Theorem \ref{gen} can be derived easily from
Theorem \ref{thm:lb:gen}.  Let $K$ be a 
vertex separator in $G$ of size $\kappa(G)$, set $S^K$ to be the
first $\lfloor \kappa(G)/2 \rfloor$ elements of $K$ and $T^K = K
\setminus S^K$.  By Theorem \ref{thm:lb:gen} the maximal
identifiability of $G$ is at least $|S^K| - 2 = \lfloor \kappa(G)/2 \rfloor - 2$.

The proof of Theorem
\ref{thm:lb:gen} uses Menger's Theorem, a well-known result in graph theory (see \cite[Theorem 5.10, p. 48]{Harary1969} for its proof).

  \begin{theorem}(Menger's Theorem)
\label{thm:menger}
Let $G=(V,E)$  be a connected graph. Then $\kappa(G) \geq k$ if and only if each pair of nodes in $V$ is connected by at least $k$ node-disjoint paths in $G$.
\end{theorem}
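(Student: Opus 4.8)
The plan is to reduce the global statement to the classical two-terminal (``local'') form of Menger's theorem and then derive the two directions separately. The local form asserts that for two \emph{non-adjacent} vertices $s,t$ the maximum number of internally vertex-disjoint $s$-$t$ paths equals the minimum size of an $s$-$t$ vertex separator. One inequality is immediate: $k$ internally disjoint $s$-$t$ paths meet any $s$-$t$ separator in $k$ distinct internal vertices, so no separator can be smaller than the number of disjoint paths. The substance is the reverse inequality, namely that a minimum separator of size $m$ is always matched by $m$ internally disjoint paths.

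For that reverse inequality I would use the vertex-splitting reduction to the edge-version max-flow/min-cut theorem: replace every internal vertex $w$ by an arc $w_{\mathrm{in}} \to w_{\mathrm{out}}$ of unit capacity, orient each original edge in both directions with infinite capacity, and give $s$ and $t$ the roles of source and sink. An integral maximum flow decomposes into unit $s$-$t$ paths that are internally disjoint (each split arc carries at most one unit), while a minimum cut uses only the finite-capacity split arcs and hence corresponds exactly to an $s$-$t$ vertex separator of the same size; max-flow/min-cut then yields the equality. An alternative that keeps everything self-contained is induction on $|E|$: delete an edge $e$, compare the minimum separator sizes in $G$ and $G-e$, and in the case where the size drops by one recombine the disjoint paths guaranteed by the two smaller instances. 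I would fall back on this route if the edge-version is not to be assumed.

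To globalize, fix two vertices $u,v$ and assume $\kappa(G)\ge k$. If $u$ and $v$ are non-adjacent, then any $u$-$v$ separator also disconnects $G$, hence has size at least $\kappa(G)\ge k$, and the local form supplies $k$ internally disjoint $u$-$v$ paths. If $u$ and $v$ are adjacent, I would pass to $H=G-uv$, in which $u,v$ are non-adjacent; using that deleting one edge lowers vertex-connectivity by at most one, $\kappa(H)\ge k-1$, so $H$ contains $k-1$ internally disjoint $u$-$v$ paths, and adjoining the edge $uv$ itself (which has no internal vertex) yields the required $k$ paths. The converse direction is the easy one: any separator $X$ disconnects some pair $u,v\notin X$, and the $k$ internally disjoint $u$-$v$ paths each force a distinct vertex of $X$ as an internal vertex, so $|X|\ge k$ and therefore $\kappa(G)\ge k$.

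The main obstacle is the reverse inequality of the local form, which is the real content of the theorem; everything else is bookkeeping. A secondary subtlety is the adjacent-vertices case of the globalization, where one must justify $\kappa(H)\ge k-1$ rather than merely $\kappa(H)\ge k$. This is exactly the at-most-one drop under edge deletion, verified by observing that a minimum separator of $H$ together with $u$ (or $v$) separates $G$, modulo degenerate cases where $H-X$ reduces to the two isolated vertices $u,v$, which are handled by noting that $\kappa(G)\ge k$ forces $|V|\ge k+1$.
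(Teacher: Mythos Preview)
The paper does not actually prove this statement; it merely cites Harary's textbook (Theorem~5.10, p.~48) and uses the result as a black box in the proof of Lemma~\ref{claim:mg-new}. Your proposal is a correct and standard route to Menger's theorem: the reduction of the local (two-terminal) version to max-flow/min-cut via vertex splitting is classical, and your globalization---handling the adjacent case by deleting the edge $uv$, invoking the at-most-one drop in $\kappa$, and adjoining $uv$ as the $k$th path---is the usual argument. Since there is no in-paper proof to compare against, nothing further is required here.
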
 

Menger's Theorem is central to the following Lemma which is used in
the proof of Theorem \ref{thm:lb:gen}.

\begin{lemma}
  \label{claim:mg-new} Let $G=(V,E)$. Let $W\subseteq V$ such that   $|W|\leq \kappa(G)-2$. 
Then any pair of vertices in $V \setminus W$ is connected by at least two vertex-disjoint simple paths not touching $W$.
\end{lemma}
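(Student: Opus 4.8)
The plan is to obtain the statement directly from Menger's Theorem by a counting argument. First observe that the hypothesis $|W|\le \kappa(G)-2$ forces $\kappa(G)\ge 2$; in particular $G$ is connected, so Theorem~\ref{thm:menger} applies. Fix two distinct vertices $u,v\in V\setminus W$ (if $V\setminus W$ has fewer than two elements there is nothing to prove). Applying Theorem~\ref{thm:menger} with $k=\kappa(G)$ yields at least $\kappa(G)$ pairwise internally vertex-disjoint simple paths $P_1,\dots,P_{\kappa(G)}$ joining $u$ and $v$; any two of them meet only in the shared endpoints $u$ and $v$.

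Next I would bound how many of these paths can be ruined by $W$. Since $u,v\notin W$, every vertex $w\in W$ that lies on some $P_i$ lies in its interior, and because the interiors of the $P_i$ are pairwise disjoint, such a $w$ lies on exactly one of them. Assigning to each $w\in W$ the unique path it hits (if any) shows that at most $|W|$ of the paths $P_1,\dots,P_{\kappa(G)}$ contain a vertex of $W$. Hence at least $\kappa(G)-|W|\ge 2$ of them avoid $W$ entirely. Choosing two such paths $P_i,P_j$ gives two simple $u$--$v$ paths that share only the endpoints $u,v\notin W$ and contain no vertex of $W$, which is exactly what is required.

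The argument has no genuinely hard step; the only points to treat carefully are minor. One is the distinction between \emph{internally} disjoint paths (what Menger provides) and fully vertex-disjoint ones: since $u,v$ are assumed to lie outside $W$, sharing those two endpoints is harmless, and "each vertex of $W$ spoils at most one path" follows immediately from internal disjointness. The other is that Theorem~\ref{thm:menger} must be used in its global form (which does not require $u$ and $v$ to be non-adjacent), so that the conclusion holds for every pair of vertices in $V\setminus W$, adjacent or not.
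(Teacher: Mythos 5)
Your argument is correct and is essentially the paper's own proof: apply Menger's Theorem to get $\kappa(G)$ internally vertex-disjoint $u$--$v$ paths, observe that each vertex of $W$ can lie on at most one of them, and conclude that at least $\kappa(G)-|W|\ge 2$ paths avoid $W$. Your added care about internal versus full disjointness and about adjacent endpoints only makes explicit what the paper leaves implicit.
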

\begin{proof}
 By Menger's Theorem, for any pair of nodes $u$ and $v$ in $V\setminus W$ there are at least $\kappa(G)$ vertex-disjoint paths from $u$ to $v$ in $G$. Call \pp the set of such paths. Since  $|W|\leq \kappa(G)-2$, then the nodes of $W$ can be  in at most $\kappa(G)-2$ of paths in \pp. Hence there are at least two paths in \pp not touching $W$.  
\qed\end{proof}

\paragraph{Proof of Theorem \ref{thm:lb:gen}.}  Let $G=(V,E)$ be an
undirected connected graph and $(S,T)$ be a monitor placement in
$G$. Note that without loss of generality that $\min(\kappa(G),|S|,|T|)>2$  (for otherwise
there is nothing to prove).

Assume first that
$|S| \geq \kappa(G)$ and $|T| \geq \kappa(G)$.  We claim that 
\[\mu(G) \geq \kappa(G)-2.\]
We show that for any distinct non-empty subsets $U$ and $W$ of $V$ of size at most $\kappa(G)-2$, there is a path in \pp touching exactly one between $U$ and $W$. 
 Given such $U$ and $W$, fix a node $u \in  U\triangle W$ and wlog $u\in U$.
%Since $G$ is connected there is a simple $S-T$ path $\pi$ passing through $u$, which is hence in \pp.
 %Assume that $\pi$ is touching $W$, since otherwise we have done.
 Since $|W| \leq \kappa(G)-2$ and $|S| \geq \kappa(G)$ there is at least a node in 
 $s \in S \setminus W$. By  the Claim  above
 % \ref{claim:mg-new}
 applied to nodes $s$ and $u$ and to the set $W$, there are two vertex-disjoint simple paths $\pi^s_1,\pi^s_2$ from $s$ to $u$ not touching $W$. The same reasoning applied to $T$, guarantees the existence of a node $t \in T\setminus W$ and two vertex-disjoint paths $\pi^t_1,\pi^t_2$ from $u$ to $t$  not touching $W$.
If at least one between $\pi^s_1$, and $\pi^s_2$ only intersects one
of $\pi^t_1$, and $\pi^t_2$ at $u$ then the concatenation of such
paths is a (longer) simple path from $s$ to $t$ passing through $u$ and not
touching $W$.
Otherwise the concatenation of one between $\pi^s_1$, and $\pi^s_2$ with one
between $\pi^t_1$, and $\pi^t_2$ is a non simple path. In what follow
we show that the subgraph of $G$ induced by the four paths does
contain a simple path from $s$ to $t$ passing through $u$ and not
touching $W$. In the construction below we exploit the fact that
$\pi^s_1$, and $\pi^s_2$ (resp. $\pi^t_1$, and $\pi^t_2$) are simple and vertex disjoint. 
Let $p$ be a path from $s$ to $u$. Define an order on the nodes of $p$ as follows: $v\prec w$ if going from $v$ to $u$ we pass though $w$.  For $i,j\in \{1,2\}$, let $Z_{ij}$ be the set of nodes in $\pi^s_i\cap \pi^t_j$. Notice that $Z_{ij}=Z_{ji}$. Nodes in $Z_{ij}$ can be ordered according to $\prec$.
So let $z_{ij}$ be the minimal node in $Z_{ij}$ wrt $\prec$. Wlog let us say that $z_{1j}\prec z_{2j}$.
Observe hence that the subpath  $\pi^s_1[s\ldots z_{1j}]$ of $\pi^s_1$ going from $s$ to $z_{1j}$, before $z_{1j}$ is intersecting neither $\pi^t_1$ nor $\pi^t_2$. Hence the concatenation of the following three disjoint paths defines a simple path from $s$ to $t$ passing through $u$ avoiding $W$, hence a path in \pp with the required properties:
\begin{enumerate} 
\item $\pi^s_1[s\ldots z_{1j}]$, going form $s$ to $z_{1j}$;
\item $\pi^t_j[z_{1j}\ldots u]$ a sub path of $\pi^t_j$ going from $u$ to $z_{1j}$ and traversed in the other direction;
\item $\pi^t_{j\!\!\! \mod 2+1}$, the other path connecting $u$ to $t$.
\end{enumerate}

Now assume that least one $|S|$
or $|T|$ is less than $\kappa(G)$. Let $r=\min(|S|,|T|)-2$.  As before
we prove that for all distinct non-empty $U$ and $W$ subsets of $V$ of size at most $r$, there is a simple $S-T$ path in $G$, hence in \pp, touching exactly one between $U$ and $W$. Let 
$u \in  U\triangle W$ and wlog $u\in U$.  Notice that
$r+2=\min(|S|,|T|)$, then both $|S|\geq r+2$ and $|T|\geq r+2$.  Since
$|W|\leq r$, as before there are $s\in S\setminus W$ and $t \in T \setminus W$. Furthermore, since $\kappa(G) \geq \min(|S|,|T|)$, then  by previous observation on $|S|$ and $|T|$, $\kappa(G) \geq r+2$ and, since $|W|\leq r$, then  
$\kappa(G) - |W| \geq 2$, that is $|W|\leq \kappa(G)-2$.  As
in the previous case we can apply the Claim above once to $s$, $u$ and $W$ getting the vertex-disjoint paths $\pi^s_1$ and $\pi^s_2$  from $s$ to $u$, and once to $t$, $u$ and $W$ getting the vertex-disjoint paths $\pi^t_1$ and $\pi^t_2$  from $t$ to $u$. The proof then follows by the same steps as in the previous case. 
We then have proved that  if $|S|$  or $|T|$ are $\leq \kappa(G)$, then  
$\mu(G) \geq \min(|S|,|T|)-2$ and the proof of Theorem
\ref{thm:lb:gen} is complete. \qed

\subsection{Improved Bounds for Networks with Low Connectivity}
\label{imp}

We
\margine{Tighter lower bound for graphs of small connectivity}
complete this section investigating a different way to relate the
graph vertex connectivity to  $\mu(G)$. It is easy to see that, in general, the bounds in Theorem \ref{gen} are not
very tight, particularly when $\kappa(G)$  is large. However, if
$\kappa(G)$ is small, we can do better.
Theorem \ref{thm:kappa3} below in particular applies to LoS network
with constant range parameter, and readily gives the lower bound
promised in Theorem \ref{los}.2.

\begin{theorem}
\label{thm:kappa3}
 Let $G=(V,E)$, and $\kappa(G) \leq \frac{|V|}{3}$. There exists  a monitor placement for $G$ 
 such that $\kappa(G)-2 \leq \mu(G) \leq \kappa(G)$.
\end{theorem}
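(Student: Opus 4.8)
The plan is to fix one monitor placement $(S,T)$ and verify the two inequalities $\mu(G)\leq\kappa(G)$ and $\mu(G)\geq\kappa(G)-2$ against it. Write $k:=\kappa(G)$; I may assume $k\geq 3$, since for $k\leq 2$ the lower bound is vacuous while $\mu(G)\leq k$ follows from the earlier bound $\mu(G)\leq\delta(G)$ applied, say, to any non-adjacent singleton placement. Let $K$ be a minimum vertex separator, so $|K|=k$, and let $C_1,\dots,C_m$ ($m\geq 2$) be the connected components of $G-K$; by minimality of $K$ every $C_i$ satisfies $N(C_i)=K$. I will always choose $S$ and $T$ to be contained in disjoint unions of the $C_i$'s. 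Then $S$ and $T$ are non-adjacent, and since $K$ is an $S$–$T$ separator of size $k$ while $\kappa_{ST}(G)\geq\kappa(G)$ in general, we get $\kappa_{ST}(G)=k$; Theorem~\ref{thm:ub} then yields $\mu(G)\leq\kappa_{ST}(G)=k$.

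For the lower bound the hypothesis $k\leq|V|/3$ is precisely what is needed: the components carry $\sum_i|C_i|=|V|-k\geq 2k$ vertices, so in the typical situation one can partition $C_1,\dots,C_m$ into two groups, each of total size at least $k$; choosing $S$ and $T$ to be $k$-element subsets of the two groups, Theorem~\ref{thm:lb:gen} immediately gives $\mu(G)\geq\min(k,|S|,|T|)-2=k-2$. The partition fails only in a degenerate situation — a single component $C_1$ so large that $R:=V\setminus(K\cup C_1)$ has fewer than $k$ vertices — and there I would instead take $S=R$ and $T\subseteq C_1$ (one still has $\kappa_{ST}(G)=k$, because $R$ and $C_1$ each see all of $K$) and prove $\mu(G)\geq k-2$ directly, mimicking the proof of Theorem~\ref{thm:lb:gen}: given distinct $U,W\subseteq N$ with $|U|,|W|\leq k-2$, fix $u\in U\setminus W$ (w.l.o.g.), use Lemma~\ref{claim:mg-new} to obtain two vertex-disjoint $W$-avoiding paths from $u$ to a chosen $s\in S\setminus W$ and two to a chosen $t\in T\setminus W$, and splice them into a single simple $S$–$T$ path through $u$ avoiding $W$ by the same surgery on their mutual intersections used there. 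What makes this go through is that $|W|\leq k-2<\kappa(G)$ keeps $G-W$ highly connected (so Lemma~\ref{claim:mg-new} applies) and that $C_1$ has more than $|W|$ vertices to route around.

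I expect the degenerate case to be the main obstacle. Theorem~\ref{thm:lb:gen} cannot be invoked as a black box there, because $\min(|S|,|T|)$ can fall well below $\kappa(G)$, so its path-surgery must be rerun with the small side $S$ under explicit control; in particular one must check that for every admissible pair $(U,W)$ the set $S$ is not entirely absorbed into $U$ or into $W$ (otherwise every $S$–$T$ path already meets that set and separability has to be recovered from a path avoiding the other set), and this is where the precise interplay between $|S|=|R|$, the quantity $k-2$, and the size of the dominant component has to be used. Verifying that the chosen $(S,T)$ genuinely realizes $\kappa_{ST}(G)=k$, so that the matching upper bound survives in this case too, is the other spot where I would be most careful.
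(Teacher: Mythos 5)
Your non-degenerate case is fine and essentially matches the paper: choose $|S|=|T|=\kappa(G)$ on opposite sides of a minimum separator $K$ and invoke Theorem~\ref{thm:lb:gen}; your upper bound via $\kappa_{ST}(G)=\kappa(G)$ and Theorem~\ref{thm:ub} is equivalent to the paper's explicit non-separable pair $U=K$, $W=K\cup\{w\}$. The gap is exactly where you suspect it, but it is fatal rather than a matter of care. In the degenerate case you take $S=R$ with $|R|<k$, and no amount of path surgery can then deliver $\mu(G)\geq k-2$: for \emph{any} monitor placement one has $\mu(G)\leq|S|$, because $U=S$ and $W=S\cup\{x\}$ satisfy $\pp(U)=\pp(W)=\pp$ (every measurement path has an endpoint in $S$), so this pair is never separable. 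Since $R$ can be a single vertex adjacent to all of $K$ (e.g.\ $|V|=3k$ with one component of size $2k-1$ and one of size $1$), your degenerate placement forces $\mu(G)\leq 1$, far below $k-2$. The concern you raise about $S$ being ``entirely absorbed into $U$ or $W$'' is precisely this obstruction, and it cannot be argued away by rerunning the path surgery.

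The paper's fix is to drop your requirement that $S$ and $T$ occupy disjoint unions of components. It puts the whole of the smallest component $C_{i_0}$ into $S$, pads $S$ up to size $k$ with vertices of the other, larger component(s), and places all $k$ vertices of $T$ in those larger components as well; the hypothesis $k\leq|V|/3$ guarantees $|V\setminus K|\geq 2k$, so there is room. Then $\min(|S|,|T|)=k$ and Theorem~\ref{thm:lb:gen} gives the lower bound with no extra work. The price is that $K$ need no longer separate $S$ from $T$ (they may even be adjacent inside a shared component), so your route to the upper bound via $\kappa_{ST}(G)=k$ breaks in this case; instead one argues directly that for $w\in C_{i_0}\cap N(K)$ the sets $U=K$ and $W=K\cup\{w\}$ are non-separable, because $T\cap C_{i_0}=\emptyset$ forces every $S$--$T$ path through $w$ to cross $K$. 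A smaller point: your dichotomy is not exhaustive either --- with $k=3$ and three components of size $2$ each, no two-group partition with both sides of total size $\geq 3$ exists, yet no single component dominates; the paper's padding construction covers such cases as well.
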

\begin{proof}
Assume $\kappa(G)=k$, and let $K$  be a minimal vertex separator in $G$.
Let $G^K_i=(V^K_i,E^K_i)$, $i\in [r_K]$ be the $r_K\geq 2$ connected components remaining in $G$ after removing $K$.  
Since $k\leq \frac{n}{3}$, then $2k \leq n-k$. Since $|V\setminus K|$
has $n-k$ nodes there are sufficient nodes 
in $V\setminus K$ to define $(S,T)$ with $|S| =|T|=k$  in such a way that the smallest among the   $V_i^K$'s
contains only element from $S$. This can be done as follows:  if the
smallest $V_i^K$'s has less than $k$ nodes, say $k-\ell$,  then assign
all its nodes to $S$. Use other components $G_j^K$'s (that will have
more than $k+\ell $ nodes)  to assign $\ell$ nodes to $S$ and $k$
other nodes to $T$. If the smallest $V_i^K$ has more than $k$ nodes,
choose $k$ among them and put them in $S$. Choose $k$ nodes in other
components and assign them to $T$.

We now prove that  $\mu(G) \leq \kappa(G)$. Let $G_i^K$ be the component where all the $S$-nodes  are assigned. Let $w$ be a node in $V^K_i \cap N(K)$.  This node has to exists since $G$ was connected and the removal of $K$ is disconnecting $G^K_i$ from $K$. Fix $U=K$ and $W=K\cup\{w\}$. We will show that $\pp(U) =  \pp(W)$. It sufficient to prove that $\pp(\{w\}) \subseteq  \pp(K)$, since clearly $\pp(U) \subseteq  \pp(W)$. Observe that no $S-T$ path $p$ in $G$ can live entirely inside $G^K_i$, i.e. have all of  its nodes in $V_i^K$. This is because at least one end-point (that in $T$) it is necessarily  missing in any path entirely living only in $G^K_i$.  Hence  a path touching $w$ is either entering or leaving $G^K_i$. But outside of $G^K_i$ $w$ is connected only to $K$, since otherwise $K$ would not be a minimal vertex separator. Hence it must be necessary that  $\pp(\{w\}) \subseteq  \pp(K)$. We have found $U,W$ of size $\leq \kappa(G)$ such that $\pp(U) =  \pp(W)$.  The upper bounds follows. 
The lower bound follows form Theorem \ref{thm:lb:gen} noticing that
$|S| = |T| = \kappa$.  \qed
\end{proof}

Arbitrary LoS networks have minimum degree, and hence also vertex
connectivity at most $2d(\omega-1)$. The next corollary follows
directly from Theorem \ref{thm:kappa3}.

\begin{corollary}
 Let $G$ be an arbitrary LoS network, with fixed range parameter
 $\omega$. Then $\mu(G) \geq \kappa(G)-2$. 
\end{corollary}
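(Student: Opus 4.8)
The plan is to obtain the corollary as a one-line consequence of Theorem~\ref{thm:kappa3}: that theorem already produces a monitor placement with $\kappa(G)-2 \le \mu(G) \le \kappa(G)$, so all that is needed is to verify its single hypothesis, $\kappa(G) \le |V|/3$, for an arbitrary LoS network when the range parameter $\omega$ (and the dimension $d$) are held constant.

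First I would dispose of the trivial cases: if $G$ is disconnected then $\kappa(G)=0$ and $\mu(G) \ge \kappa(G)-2 = -2$ is immediate (indeed $\mu(G)\ge 0$ always, since $0$-identifiability is vacuously true), so assume $G$ is connected. Next, bound the minimum degree: fixing the embedding $f_G$ of $G$ into $\mathbb{Z}^d_n$, a vertex $v$ can only be adjacent to vertices whose image differs from $f_G(v)$ in a single coordinate, by less than $\omega$, and there are at most $\omega-1$ such positions in each of the two senses of each of the $d$ axes; hence $\deg(v) \le 2d(\omega-1)$ for every $v$, so $\delta(G) \le 2d(\omega-1)$, and by inequality~(\ref{eq:2}) we get $\kappa(G) \le \delta(G) \le 2d(\omega-1)$. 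Since $d$ and $\omega$ are fixed, any LoS network with $|V| \ge 6d(\omega-1)$ satisfies $\kappa(G) \le 2d(\omega-1) \le |V|/3$; Theorem~\ref{thm:kappa3} then applies verbatim and furnishes a monitor placement with $\mu(G) \ge \kappa(G)-2$, which is the assertion.

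The only point requiring a moment's care --- and the main (minor) obstacle --- is the size threshold $|V| \ge 6d(\omega-1)$: a LoS network on very few vertices relative to $d\omega$ (say a LoS realisation of a small clique) can fail the hypothesis of Theorem~\ref{thm:kappa3}. In the intended regime of fixed $d,\omega$ and arbitrarily large $n$ this never occurs, and one may simply take ``sufficiently large'' as a standing assumption; the finitely many small cases are trivial whenever $\kappa(G)\le 2$, and otherwise could be handled by an ad hoc separator-plus-paths argument, but these are orthogonal to the main line of reasoning and I would not dwell on them.
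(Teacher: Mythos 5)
Your proposal is correct and follows the same route as the paper: bound $\delta(G)\le 2d(\omega-1)$ from the LoS structure, deduce $\kappa(G)\le 2d(\omega-1)$ via inequality~(\ref{eq:2}), and invoke Theorem~\ref{thm:kappa3}. Your extra attention to the hypothesis $\kappa(G)\le |V|/3$ for small networks is a point the paper glosses over, but it does not change the argument.
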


\section{Random Networks and Tradeoffs} 
\label{sec:random}

The main aim of this work is to characterize the
identifiability in terms of the vertex connectivity. In this section
we prove that tight results are possible in
random graphs. Also we show an interesting trade-off between the
success probability of the various random processes and the size of
the sets $S$ and $T$. Finally, random graphs give us constructions of networks
with large identifiability.

\subsection{Sub-Linear Separability in Erd\H{o}s-R\'enyi Graphs}
\label{gnp-sect}

 We start our investigation of the identifiability  of node
failures in random graphs by looking at the binomial model
$G(n,p)$, for fixed $p \leq 1/2$ (in this section only we follow the
traditional random graph jargon and use $p$ to  denote the graph edge
probability rather than  a generic path). The following equalities,
which hold
with probability approaching one as $n$ tends to infinity high
probability (that is {\em with high probability} (w.h.p.)),  are folklore:
\begin{equation}
\kappa(G(n,p))= \delta(G(n,p)) = np - o(n).
\label{eq:1}
\end{equation}
(see \cite{bollobas01:_random_graph}). Here we describe a simple
method which can be used to separate sets of vertices of sublinear size.

We assume, for now, that $S$ and $T$ are each formed by $\gamma = \gamma(n)$ nodes with $\kappa(G(n,p)) \leq
\gamma < n/2$. Let $M = S \cup T$.
\begin{figure}[t]
  \centering
  \includegraphics[scale=0.3]{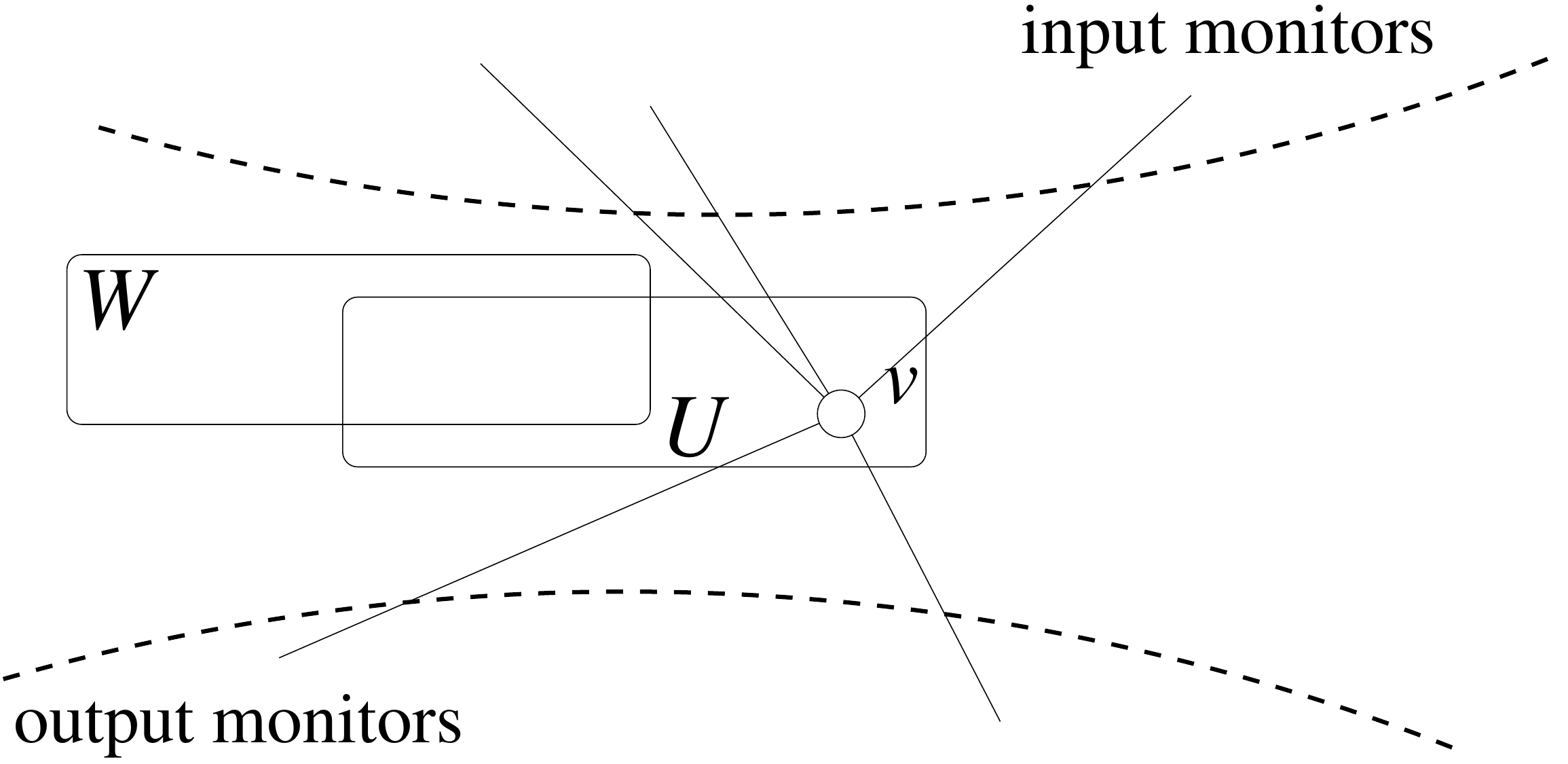}
  \caption{A node $v \in U \Delta W$ and a possible way to connect it
    to $S$ and $T$.}
  \label{fig:3}
\end{figure}
Let $U$ and $W$ be two
arbitrary subsets of $V \setminus M$ of size $k$. The probability that
$U$ and $W$ are separable is at least
the probability that an element $v$ of $U \Delta W$
(w.l.o.g. assume $v \in U \setminus W$) is directly connected to a node in
$S$ and to a node in $T$. This event has
probability $(1-(1-p)^{\gamma})^2.$ 
Hence the probability that $U$ and $W$ cannot be separated is at most
$1 - (1-(1-p)^{\gamma})^2 = 2(1-p)^{\gamma}- (1-p)^{2\gamma}$
and therefore the probability that some pair of sets $U$ and $W$ of
size $k$ (not
intersecting $M$) fail is at most
$2 {n - 2\gamma \choose k}^2 (1-p)^{\gamma}.$

\begin{theorem}
\label{gnp}
For fixed $p$ with $p \leq 1/2$, under the assumptions above about the way monitors are placed in
$G(n,p)$, the probability that $G(n,p)$ is not $k$-vertex separable is at most
$2 k {n \choose k}^2  \e^{(2k-\gamma) p}.$
\end{theorem}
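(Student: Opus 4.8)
The statement is essentially a repackaging of the union bound developed in the discussion preceding it, recast in a form that exposes the monitors-versus-identifiability tradeoff, so the plan is mostly careful bookkeeping rather than a new idea.

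First I would re-derive, for a single ordered pair $(U,W)$ of distinct $k$-subsets of $V\setminus M$, the failure estimate already sketched above. Fix any $v\in U\triangle W$, say $v\in U\setminus W$. Since $U,W\subseteq V\setminus M$ we have $v\notin S\cup T$, so if $v$ has a neighbour $s\in S$ and a neighbour $t\in T$ then $s,v,t$ is a simple $S$-$T$ path; it meets $U$ exactly in $\{v\}$ and misses $W$ entirely, hence it lies in $\pp(U)\setminus\pp(W)\subseteq\pp(U)\triangle\pp(W)$ and the pair is separable. The $\gamma$ potential edges from $v$ into $S$ and the $\gamma$ potential edges from $v$ into $T$ form two disjoint families of independent $\mathrm{Bernoulli}(p)$ edge indicators, so the good event has probability $(1-(1-p)^{\gamma})^{2}$ and the pair fails with probability at most $1-(1-(1-p)^{\gamma})^{2}=2(1-p)^{\gamma}-(1-p)^{2\gamma}\le 2(1-p)^{\gamma}$. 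A union bound over the at most ${n-2\gamma\choose k}^{2}$ ordered pairs then gives
\[
\Pr[\,G(n,p)\text{ is not }k\text{-separable}\,]\ \le\ 2{n-2\gamma\choose k}^{2}(1-p)^{\gamma}.
\]

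Next I would convert this into the claimed bound using only elementary inequalities: ${n-2\gamma\choose k}\le{n\choose k}$; the estimate $(1-p)^{\gamma}\le\e^{-\gamma p}$ obtained from $1+x\le\e^{x}$; and a final padding with the harmless factors $k\ge 1$ and $\e^{2kp}\ge 1$, which yields $2{n-2\gamma\choose k}^{2}(1-p)^{\gamma}\le 2k{n\choose k}^{2}\e^{(2k-\gamma)p}$, as required. If ``$k$-separable'' is read as quantifying over all pairs of size at most $k$ rather than exactly $k$, the length-two path argument is insensitive to the sizes of $U$ and $W$, so one may either invoke the monotonicity of $k$-identifiability noted in Section~\ref{sec:prel} to reduce to the size-exactly-$k$ case, or absorb the extra $\sum_{a,b\le k}$ in the count into the same slack factors.

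I do not anticipate a real obstacle: the probabilistic content---independence of the two edge families at $v$, and the fact that the witnessing path simultaneously certifies membership in $\pp(U)$ and avoidance of $\pp(W)$ precisely because $U$ and $W$ avoid the monitor set $M$---is already in hand, and what remains is to pick the final crude form loosely enough. The two points to watch are that $v\notin S\cup T$ (so that $s,v,t$ is a legitimate monitor-to-monitor path) and that the union bound ranges over the intended family of pairs. I would also note in passing that a sharper, $k$-free estimate $2n(1-p)^{\gamma}$ is available by union-bounding over the single witness vertex instead of over pairs---any failing pair forces some $v\in V\setminus M$ to have no neighbour in $S$ or no neighbour in $T$---but the stated form is the one that displays the $\gamma$-versus-$k$ tradeoff used in the sequel.
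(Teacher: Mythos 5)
Your per-pair estimate and the arithmetic converting $2\binom{n-2\gamma}{k}^{2}(1-p)^{\gamma}$ into $2k\binom{n}{k}^{2}\e^{(2k-\gamma)p}$ are both correct, but they only establish separability for pairs $U,W\subseteq V\setminus M$, and that is not what the theorem asserts. By the paper's definitions, $k$-vertex separability of $G(n,p)$ quantifies over all pairs of node sets of size at most $k$ drawn from the vertices covered by \pp, and this includes the monitors, since every measurement path has its endpoints in $S\cup T$. You flag ``$v\notin S\cup T$'' and ``the union bound ranges over the intended family'' as points to watch, but the proof never returns to them: you fix $U,W\subseteq V\setminus M$ at the outset and keep that restriction to the end. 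The paper's proof of this theorem consists precisely of the two complications you set aside: (i) the witness $v\in U\setminus W$ may itself lie in $S$ (or $T$), in which case a single edge into $T\setminus W$ (resp.\ $S\setminus W$) suffices and the success probability only improves; and (ii) $W$ may contain monitors, so $v$ must connect to $S\setminus W$ and $T\setminus W$ rather than to all of $S$ and $T$, and these sets can have as few as $\gamma-2k$ elements each, driving the per-pair failure probability up to $2(1-p)^{\gamma-2k}$. This second point is why the stated bound carries the factor $\e^{2kp}\approx(1-p)^{-2k}$: it is not harmless padding but the entire cost of passing from the restricted family to the general one. Likewise the leading factor $k$ comes from counting $\sum_{h\le k}\binom{n}{h}^{2}\le k\binom{n}{k}^{2}$ pairs of sets of size \emph{at most} $k$, not from an arbitrary $k\ge 1$ slack.

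The repair is short but necessary: when $v\in M$ use the one-sided connection event, and when $W\cap M\neq\emptyset$ replace $\gamma$ by $\gamma-2k$ in the exponent (since $|S\setminus W|,|T\setminus W|\ge\gamma-2k$); the union bound over all pairs of sizes up to $k$ then yields $2k\binom{n}{k}^{2}(1-p)^{\gamma-2k}\le 2k\binom{n}{k}^{2}\e^{(2k-\gamma)p}$ as claimed. Note also that your closing remark about a sharper, $k$-free bound $2n(1-p)^{\gamma}$ suffers from the same restriction: once $W$ is allowed to contain monitors, a vertex having some neighbour in $S$ and some neighbour in $T$ no longer certifies separability, because those neighbours could themselves lie in $W$.
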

\begin{proof}
The argument above works if both $U$ and $W$ contain no vertex
  in $M$. The presence of elements of vertices in $M$ in $U$ or $W$
  may affect the analysis in two ways. First $v$ could be in $M$ (say
  $v \in S$). In this case $U$ and $W$ are separable if $v$ is
  directly connected to a vertex in $T$. This happens with
  probability 
$(1-(1-p)^{\gamma}) > (1-(1-p)^{\gamma})^2.$
Second, $M$ might contain some elements of $U$ and $W$ different from
$v$. In the worst case when $v$ is trying to connect to $M$, it must
avoid at most $2k$ element of such set. 
There is at most $\sum_{h\leq k} {n \choose h}^2 \leq k {n \choose
  k}^2$ pairs of $U$ and $W$ of size at most $k$. Thus the probability
that $G(n,p)$ fails to be $k$-vertex separable is at most
$2 k {n \choose k}^2  (1-p)^{\gamma-2k}.$
and the result
 follows as $1-p \leq \e^{-p}$.
\qed\end{proof}

\subsection{Linear Separability in Erd\H{o}s-R\'enyi Graphs}

The argument above cannot be pushed all the way up to
$\kappa(G(n,p))$. When trying to separate vertex sets containing
$\Omega(n)$ vertices the problem is that these sets can form a large
part of $M$ and the existence of direct links from $v$ to
$S\setminus W$ and $T \setminus W$ is
not guaranteed with sufficiently high probability. However a different
argument allow us to prove the following:

\begin{theorem}
\label{gnp-tight}
For fixed $p$,  $\mu(G(n,p)=  \kappa(G(n,p))$ w.h.p. 
\end{theorem}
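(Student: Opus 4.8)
The plan is to prove matching upper and lower bounds. The upper bound $\mu(G(n,p)) \le \kappa(G(n,p))$ is immediate from Theorem~\ref{thm:ub} together with $\kappa_{ST}(G) \ge \kappa(G)$: for \emph{any} monitor placement the maximal identifiability is at most $\kappa_{ST}(G(n,p))$, and we may simply note $\kappa_{ST}\ge\kappa$; but to get the tight statement we actually want to choose a placement realizing $\kappa_{ST}=\kappa$. So the first step is to fix a minimum vertex separator $K$ of $G(n,p)$ (of size $k=\kappa(G(n,p))$), which by~(\ref{eq:1}) satisfies $k = np - o(n)$ w.h.p., and to split $K$ into two halves to form $(S,T)$ exactly as in the derivation of Theorem~\ref{gen} from Theorem~\ref{thm:lb:gen}, or better, to use the placement from Theorem~\ref{thm:kappa3}: since $\kappa(G(n,p)) = np - o(n) \le n/3$ for $p < 1/3$, that theorem directly gives a placement with $\kappa(G)-2 \le \mu(G) \le \kappa(G)$. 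The remaining gap of $2$ is what the new argument must close, and for $p$ close to $1$ one needs the alternative placement described below.

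The heart of the matter is the lower bound $\mu(G(n,p)) \ge \kappa(G(n,p))$ w.h.p., i.e.\ improving the ``$-2$'' to ``$0$''. Recall from the proof of Theorem~\ref{thm:lb:gen} that the loss of $2$ comes from two places: (i) when forming $U,W$ of size up to $k$, one must find $s \in S\setminus W$ and $t\in T\setminus W$, which forces $|W| \le |S|-1$ and hence $k \le |S|-1$; and (ii) Lemma~\ref{claim:mg-new} needs $|W|\le\kappa(G)-2$ to guarantee \emph{two} vertex-disjoint $W$-avoiding paths. To kill (i), I would take $|S|=|T|$ strictly larger than $k$ --- say $|S|=|T| = \gamma$ with $\gamma = \Theta(n)$ but $\gamma < n/2$, placed on nodes outside a minimum separator, so that the quantity $\min(\kappa(G),|S|,|T|)$ in Theorem~\ref{thm:lb:gen} equals $\kappa(G)$ rather than $|S|$. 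To kill (ii), I would replace the Menger/pigeonhole counting of Lemma~\ref{claim:mg-new} by a direct probabilistic argument specific to $G(n,p)$: given $U,W$ with $|U|,|W|\le k = np-o(n)$ and a distinguished $v\in U\triangle W$, show that w.h.p.\ $v$ has a neighbour in $S\setminus W$ \emph{and} the subgraph $G[V\setminus W]$ is still connected (indeed, still has an expansion/connectivity property), so a single $W$-avoiding $S$–$T$ path through $v$ exists. Connectivity of $G(n,p)$ after deleting any $k = np-o(n)$ vertices holds w.h.p.\ because $\kappa(G(n,p)) = np-o(n)$, which is exactly what~(\ref{eq:1}) asserts --- so deleting $|W|\le k$ vertices, with $W$ \emph{not} a minimum separator in the generic case, leaves the graph connected; one then routes $v$ to $S$ and to $T$ within $G[V\setminus W]$.

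Concretely the steps I would carry out, in order: (1) state and invoke~(\ref{eq:1}) to fix $k := \kappa(G(n,p)) = \delta(G(n,p)) = np - o(n)$ w.h.p.; (2) define the monitor placement --- a minimum separator $K$ with $|K|=k$, components $G_1^K,\dots,G_{r_K}^K$, and $S,T$ of size $\gamma$ (with $k \le \gamma < n/2$) chosen as in Theorem~\ref{thm:kappa3} so that $\kappa_{ST}=k$, giving $\mu(G(n,p))\le k$; (3) for the lower bound, take arbitrary distinct $U,W\subseteq V$ with $|U|,|W|\le k$ and $v\in U\setminus W$ (wlog), and show by a union bound over the $\le k\binom{n}{k}^2$ choices that w.h.p.\ $v$ has a neighbour in $S\setminus W$ and in $T\setminus W$ when $\gamma - 2k = \Omega(n)$ --- this is the range $p$ not too small; (4) for the complementary range (small $p$, where $\gamma-2k$ is not linear), instead argue that deleting $W$ leaves $G$ connected w.h.p.\ uniformly over all $|W|\le k$ that are not vertex cuts, and route a simple path $s$–$v$–$t$ avoiding $W$ through the surviving connected graph, handling the exceptional case $W\supseteq K$ separately (then $U$ of size $\le k$ cannot contain a cut, so $U\triangle W$ has a vertex still connected to both $S$ and $T$); (5) combine to conclude $\mu(G(n,p)) = k = \kappa(G(n,p))$ w.h.p.

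The main obstacle I anticipate is step~(4) --- the uniformity over \emph{all} $\le k\binom{n}{k}^2 = 2^{O(n\log n)}$ pairs $(U,W)$ while only paying $(1-p)^{\Theta(n)}$-type failure probabilities per pair: the binomial coefficients $\binom{n}{k}^2$ with $k = \Theta(n)$ are $2^{\Theta(n)}$, so for small constant $p$ the naive bound $k\binom{n}{k}^2 (1-p)^{\gamma-2k}$ from Theorem~\ref{gnp} does \emph{not} go to zero when $k$ is linear. This is precisely why the ``direct link'' argument of Theorem~\ref{gnp} breaks down in the linear regime and why a genuinely different argument (robust connectivity of $G[V\setminus W]$, together with a careful treatment of the $v\in M$ and $W\cap M\ne\emptyset$ sub-cases as already sketched in the proof of Theorem~\ref{gnp}) is required for the tight result.
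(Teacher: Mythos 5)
Your upper bound and your diagnosis of why the direct-link argument of Theorem~\ref{gnp} breaks down in the linear regime are both correct, but the lower-bound mechanism you propose in steps (3)--(4) has genuine gaps that your own closing paragraph half-admits. First, ``deleting $W$ leaves $G$ connected uniformly over all $|W|\le k$ that are not vertex cuts'' is circular (a set is a non-cut precisely when its deletion leaves the graph connected), and the exceptional sets are not rare curiosities: since $\kappa(G(n,p))=\delta(G(n,p))=k$ w.h.p., the neighbourhood $N(v)$ of any minimum-degree vertex is a cut of size $k$, and your sketch for handling $W\supseteq K$ does not cover, say, $W=N(v)$ with $v\in U$. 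Second, and more importantly, mere connectivity of $G[V\setminus W]$ does not let you ``route a simple path $s$--$v$--$t$'': concatenating an $s$--$v$ walk with a $v$--$t$ walk need not be simple, which is exactly the difficulty that forces the two-disjoint-paths machinery (Lemma~\ref{claim:mg-new} and the $Z_{ij}$ surgery) in the proof of Theorem~\ref{thm:lb:gen}; you would need at least $2$-connectivity of $G[V\setminus W]$, which after deleting $\kappa(G)$ vertices you cannot have uniformly. Third, you never produce the per-$W$ failure probability of order $o(2^{-n})$ that a union bound over the exponentially many sets $W$ requires; you flag this as ``the main obstacle'' but leave it open.

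The paper closes all three gaps with one device you did not consider: Hamiltonicity. Fix $W$ and delete it; $G(n,p)\setminus W$ is distributed as a binomial random graph on at least $n-np$ vertices with constant edge probability, and by Bollob\'as--Fenner--Frieze it contains a Hamilton path from some $s\in S$ to some $t\in T$ with probability $1-o(2^{-n})$. A single Hamilton path is automatically a \emph{simple} $S$--$T$ path passing through \emph{every} vertex outside $W$, so it separates $W$ from every possible $U$ simultaneously --- the union bound is only over the at most $2^{n}$ choices of $W$, not over pairs $(U,W)$, and $2^{n}\cdot o(2^{-n})\to 0$. This simultaneously removes the need for any connectivity-after-deletion claim, the need to route through a prescribed vertex, and the union-bound blow-up. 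If you want to complete your write-up, replacing your step (4) with this Hamilton-path argument is the missing idea.
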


Full details of the proof are left to the final version of this paper,
but here is an informal explanation.
The upper bound follows immediately from (\ref{eq:1}) and  Theorem
\ref{thm:ub}.  For the lower bound we claim that the chance that two sets of size at
most $np$ are not vertex separable is small. To see this pick two 
sets $U$ and $W$, and remove, say, $W$. $G(n,p) \setminus W$ is still
a random graph on at least $n -  np$ vertices and constant edge
probability. Results in \cite{bollobas87:_algor_findin_hamil_paths_cycles_random_graph} imply that $G(n,p) \setminus W$ has a
Hamilton path starting at some $s \in S$ and ending at some $t \in T$ 
with probability at least $1-o(2^{-n})$ (and in fact
there is a fast polynomial time algorithm that finds one). Such
Hamilton path, by definition, contains a path from $S$ to $T$ passing
through $v \not \in W$, for every possible choice of $v$. This proves, w.h.p.,
the separability of sets of size up to $\kappa(G(n,p))$. Past such
value the construction in Theorem  \ref{thm:ub} applies.

\subsection{Random Regular Graphs}

A standard way to model random graphs with fixed vertex degrees is Bollobas' configuration model \cite{bollobas80:_probab_proof_asymp_formul_number_regul_graph}. There's
$n$ buckets, each
with $r$ free points. A random pairing of these free points has a
constant probability of not containing any pair containing two points
from the same bucket or two pairs containing points from just two
buckets. These configurations are in one-to-one correspondence with
$r$-regular $n$-vertex simple graphs.  Denote by ${\cal C}_{n,r}$ the
set of all configurations $C(n,r)$ on $n$ buckets each containing $r$ points,
and let $G(r$-reg$)$ be a random $r$-regular graph.

As before assume $|S| = |T| = \gamma$. The main result of this section is the
following:
\begin{theorem}
\label{reg}
 Let $r \geq 3$ be a fixed integer. $\mu(G(r$-reg$)= r$ w.h.p. 
\end{theorem}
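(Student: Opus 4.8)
The plan is to prove the two bounds separately. For the upper bound no probabilistic input is needed: $G(r\text{-reg})$ is $r$-regular, so $\delta(G(r\text{-reg}))=r$, and the degree bound $\mu(G)\le\delta(G)$ of \cite{GR18} yields $\mu(G(r\text{-reg}))\le r$ deterministically. For the lower bound I would reuse the Hamilton-path idea already sketched for Theorem~\ref{gnp-tight}. Fix the monitor placement with $|S|=|T|=\gamma$, choosing $r<\gamma<n/2$ so that $S\setminus W$ and $T\setminus W$ are non-empty whenever $|W|\le r$. Given distinct $U,W\subseteq V$ with $|U|,|W|\le r$, the goal is to produce an $S$--$T$ path meeting exactly one of them. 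First I would pick $v\in U\triangle W$ and, renaming $U$ and $W$ if necessary, assume $v\in U$ (so $v\notin W$). Choosing $s\in S\setminus W$ and $t\in T\setminus W$, it is then enough to exhibit a Hamilton path of $G\setminus W$ with endpoints $s$ and $t$: such a path is a simple $S$--$T$ path, it avoids $W$, and since it is Hamiltonian in $G\setminus W$ it passes through $v$; hence it lies in $\pp(U)\setminus\pp(W)$, certifying separability.

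Everything therefore reduces to the following statement, which I would establish w.h.p.: for every $W\subseteq V$ with $|W|\le r$ such that $G\setminus W$ has no isolated vertex, and for every $s,t\notin W$, the graph $G\setminus W$ contains a Hamilton $s$--$t$ path. I would argue inside Bollob\'as' configuration model \cite{bollobas80:_probab_proof_asymp_formul_number_regul_graph} conditioned on simplicity (a constant-probability event, so w.h.p.\ conclusions transfer to $G(r\text{-reg})$). Removing the $\le r$ buckets of $W$ together with the $O(1)$ points that pair into them leaves a random pairing on $n-O(1)$ buckets, all of degree $r$ apart from the $O(1)$ former neighbours of $W$, each of which retains degree at least $1$ — degree $0$ being precisely the excluded isolated-vertex situation. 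For such near-regular random (multi)graphs with only a bounded number of low-degree vertices, a robust Hamilton-connectivity statement — obtained from known Hamiltonicity results for random regular graphs (\cite{bollobas87:_algor_findin_hamil_paths_cycles_random_graph}, together with work of Robinson--Wormald in the cubic case and local-resilience results of Krivelevich and collaborators) combined with a rotation--extension or absorption argument to fix the two endpoints — should give a Hamilton $s$--$t$ path with failure probability that is super-polynomially small in $n$. A union bound over the $n^{O(1)}$ possible sets $W$ and the $O(n^2)$ pairs $(s,t)$ then finishes the lower bound. The degenerate placements, where $G\setminus W$ does have an isolated vertex $x$ (forcing $W=N(x)$ and $|W|=r$), I would handle by a short case analysis: if $U\triangle W$ has a vertex other than $x$ lying outside $W$ we reroute through it; otherwise $U\subsetneq W$ with $|U|\le r-1$ and we instead route an $S$--$T$ path through some $w\in W\setminus U$ while avoiding $U$, which is back in the non-degenerate regime. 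Here one invokes the easy w.h.p.\ facts that distinct vertices of $G(r\text{-reg})$ have distinct neighbourhoods and that no two adjacent vertices share $r-1$ common neighbours, both immediate first-moment computations in the configuration model.

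I expect the robust-Hamiltonicity step to be the main obstacle. Merely producing one Hamilton cycle in $G(r\text{-reg})$ is classical, but here I need a Hamilton path with two prescribed endpoints in $G\setminus W$, and I need its failure probability to be small enough — super-polynomial, ideally exponential — to survive a union bound over all $n^{O(1)}$ small sets $W$ and all $O(n^2)$ endpoint pairs simultaneously. Getting that kind of uniform quantitative control calls for the resilience/absorption machinery rather than the plain second-moment argument, and that is where the real work of the proof will lie.
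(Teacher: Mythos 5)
Your upper bound is fine (the degree bound $\mu\le\delta=r$ is even cleaner than invoking $r$-connectivity), but the lower-bound strategy has a genuine gap: the robust Hamilton-connectivity statement you reduce to is false for sparse random regular graphs, and not only in the isolated-vertex case you set aside. For \emph{any} vertex $x$ and \emph{any} choice of $r-1$ of its neighbours, taking $W$ to contain those $r-1$ neighbours (and, if you like, one more arbitrary vertex) leaves $x$ with degree $1$ in $G\setminus W$. A degree-one vertex must be an endpoint of every Hamilton path, so for $s,t\neq x$ there is no Hamilton $s$--$t$ path in $G\setminus W$ whatsoever; there are $\Theta(n^2)$ such sets $W$ deterministically, so no union bound can rescue the claim. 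For $r=3$ the situation is worse still: a first-moment computation shows that w.h.p.\ there exist sets $W$ of size $3$ leaving \emph{two} vertices of degree $1$, so $G\setminus W$ has no Hamilton path between any pair of monitors at all. The Hamilton-path argument works for $G(n,p)$ with constant $p$ precisely because every vertex has linear degree and deleting $O(np)$ vertices still leaves minimum degree $\Omega(n)$ uniformly over all choices of $W$; that resilience simply does not hold when the degree is the constant $r$ and $|W|$ is allowed to be as large as $r$.

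The paper avoids spanning structures entirely. It only needs an $S$--$T$ path through the \emph{one} chosen vertex $v\in U\triangle W$ avoiding $W$, and it builds it by exposing the configuration model on the fly: grow two vertex-disjoint simple paths of length $\ell\in\omega(1)\cap o(n)$ out of $v$ by random walks that dodge $W$ (succeeding w.h.p.\ because $|W|$ and the walk length are small), then argue that among the $\Theta(\ell)$ unmatched configuration points on each path, at least one pairs into $S$ (resp.\ $T$) with probability $1-(1-\gamma/n)^{\Theta(\ell)}$. This local construction is exactly what survives the union bound over the $O(n^{2k})$ choices of $U,W$, at the cost of a trade-off between $\gamma$, $\ell$ and the achievable $k$. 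If you want to salvage your approach you would have to replace ``Hamilton $s$--$t$ path in $G\setminus W$'' by ``some path from $S\setminus W$ to $T\setminus W$ in $G\setminus W$ through the specific vertex $v$,'' at which point you are back to an argument of the paper's type rather than a resilience statement.
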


The upper bound follows from Theorem \ref{thm:ub} and the
weil-known fact that random $r$-regular graphs are $r$-connected w.h.p.  The lower bound is a consequence of the following:

\begin{lemma}
\label{reg-lemma}
 Let $r \geq 3$ be a fixed integer.
Two sets $U$ and $W$ with $U, W \subseteq V(G(r$-reg$))$ and
 $\max(|U|,|W|) \leq k$ are separable w.h.p. if $k = r - o(1)$.
\end{lemma}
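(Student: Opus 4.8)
The plan is to mimic, in the configuration model, the Hamilton-path argument sketched for $G(n,p)$ in Theorem~\ref{gnp-tight}. Fix $U,W\subseteq V$ with $\max(|U|,|W|)\le k$ and $k=r-o(1)$; since the two sets play symmetric roles assume the distinguishing vertex lies in $U\setminus W$ (the case $v\in S\cup T$ is handled exactly as in the proof of Theorem~\ref{thm:lb:gen}, and only makes the event easier). The key observation is that, when $r$ is a fixed constant $\ge 3$, the set $W$ has at most $k\le r-o(1)$ vertices, so deleting $W$ from $G(r\text{-reg})$ removes only a bounded number of buckets; the remaining structure on $n-|W|$ buckets is again (after conditioning on the pairing among the surviving free points) distributed like a random multigraph with essentially all degrees equal to $r$ and at most $r|W|=O(1)$ buckets of smaller degree. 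First I would argue that such a graph is still $2$-connected w.h.p.\ and, more importantly, contains a Hamilton path w.h.p.: this is the analogue of the statement used for $G(n,p)$, and for random regular graphs with $r\ge 3$ it follows from the classical Hamiltonicity results in the configuration model (Robinson--Wormald and successors), which are robust to deleting $O(1)$ vertices. Given a Hamilton path $H$ of $G(r\text{-reg})\setminus W$, it visits every vertex of $V\setminus W$, in particular it visits some $s\in S\setminus W$ and some $t\in T\setminus W$ (these exist because $|S|=|T|=\gamma\ge\kappa\ge r>k\ge|W|$), and the sub-path of $H$ between $s$ and $t$ is an $S$--$T$ path in $\pp$ that avoids $W$ but contains $v$ (indeed it contains all of $V\setminus W$, so $v$ in particular). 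Hence $\pp(U)\triangle\pp(W)\ne\emptyset$, i.e.\ $U$ and $W$ are separable.

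To make the Hamilton-path input precise I would proceed in three steps. Step one: expose the pairing in two stages, first revealing all pairs incident to buckets of $W$, then working inside the induced configuration on the untouched buckets; a standard contiguity/conditioning argument shows the second stage is distributed like a configuration model on $n-|W|$ buckets in which all but $O(1)$ have degree exactly $r$ and the rest have degree $\ge r-1$ (we may simply throw away a bounded number of now-unpaired points). Step two: invoke the known result that a random $r$-regular graph ($r\ge 3$) is Hamiltonian with probability $1-o(1)$, together with the fact that changing a bounded number of degrees or vertices perturbs this probability by $o(1)$ — this can be phrased either via an explicit switching argument or by absorbing the low-degree buckets into a slightly larger regular instance. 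Step three: from a Hamilton path (or cycle, from which a path is obtained by deleting one edge) read off the required $s$--$t$ subpath as above. Finally, since there are at most $\sum_{h\le k}\binom{n}{h}^2\le k\binom{n}{k}^2 = n^{O(1)}$ pairs $(U,W)$ and the failure probability for each is $o(1)$ — in fact exponentially small if the Hamiltonicity bound is taken in the $1-o(2^{-n})$ form used in Theorem~\ref{gnp-tight} — a union bound gives separability simultaneously for all such pairs w.h.p.

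The main obstacle is Step two: obtaining a Hamiltonicity (or even long-path) guarantee for the \emph{near}-regular graph left after deleting $W$, with a failure probability small enough to survive the union bound over all $n^{O(1)}$ choices of $(U,W)$. Off-the-shelf statements give Hamiltonicity of $G(r\text{-reg})$ with probability $1-o(1)$, which is not by itself enough; one needs either the sharper large-deviation version (Hamiltonicity fails with probability $e^{-\Omega(n)}$, available in the literature for $r\ge 3$) or a direct argument that the rare bad configurations form a negligible fraction even after the union bound. I expect this to be the only genuinely delicate point; everything else — the conditioning in Step one, the extraction of the $s$--$t$ subpath, and the counting of pairs — is routine. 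Note also the hypothesis $k=r-o(1)$ is used precisely to ensure $|W|<r\le\gamma$, so that $S\setminus W$ and $T\setminus W$ are non-empty and the deleted set is of bounded size; this is exactly the threshold beyond which the construction of Theorem~\ref{thm:ub} takes over, matching the upper bound and yielding $\mu(G(r\text{-reg}))=r$ in Theorem~\ref{reg}.
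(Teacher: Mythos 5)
Your proposal takes a genuinely different route from the paper --- the paper does \emph{not} use Hamilton paths for random regular graphs; it runs an algorithm ({\sc PathFinder}) that grows two short vertex-disjoint paths out of $v$ inside the configuration model, avoiding $W$, and then argues that the $\Theta(\ell)$ unmatched points on these paths hit $S$ and $T$ when the remaining pairing is exposed. Unfortunately your route has a gap that I do not think can be repaired as stated. The whole argument rests on a union bound over the $n^{\Theta(k)}$ pairs $(U,W)$, so you need the probability that $G(r\text{-reg})\setminus W$ fails to contain the required Hamilton path to be $n^{-\omega(k)}$. You assert that an $e^{-\Omega(n)}$ large-deviation bound for Hamiltonicity is ``available in the literature for $r\ge 3$''; it is not, and indeed it is provably false for fixed $r$: a random $r$-regular graph contains a $K_{r+1}$-component (hence is disconnected, hence non-Hamiltonian) with probability $\Theta(n^{(r+1)(2-r)/2})$, which for $r=3$ is $\Theta(n^{-2})$. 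So the per-pair failure probability is only polynomially small, with an exponent that does not grow with $k$, and the union bound over $n^{\Theta(k)}$ pairs cannot be closed. This is exactly why the exponentially-strong Bollob\'as--Fenner--Frieze bound works for dense $G(n,p)$ in Theorem~\ref{gnp-tight} but has no analogue in the sparse regular setting; the paper's local path-building argument is designed to sidestep this.

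There is a second, smaller error in the extraction step. From an arbitrary Hamilton path $H$ of $G\setminus W$ you take ``the sub-path of $H$ between $s$ and $t$'' and claim it contains $v$ because $H$ contains all of $V\setminus W$. The sub-path between $s$ and $t$ contains only the vertices lying between them on $H$, and nothing prevents all of $S\setminus W$ and all of $T\setminus W$ from lying on the same side of $v$. To make this step work you would need a Hamilton path whose two \emph{endpoints} are in $S$ and $T$ respectively (a Hamilton-connectedness type statement), which is a strictly stronger input and only aggravates the probability problem above. The parts of your write-up that are sound --- the two-stage exposure of the pairing, the observation that $|W|\le k<r\le\gamma$ guarantees $S\setminus W,T\setminus W\neq\emptyset$, and the count of $k\binom{n}{k}^2$ pairs --- are all peripheral to these two issues.
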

\begin{proof} In what follows we often use
  graph-theoretic terms, but we actually work with a random
  configuration $C(n,r)$.  Let $U$ and $W$ be two sets of $k$ buckets. For simplicity assume that
(the vertices corresponding to the elements of) both $U$ and $W$ are subsets of $V \setminus M$. The probability that
$U$ and $W$ can be separated 
is at least the probability that a (say) random element $v$ of $U \triangle W$
(w.l.o.g. $v \in U \setminus W$) is connected to $S$ by a path of
length at most $\ell_s$ and to $T$ by a path of length at most
$\ell_t$, neither of which ``touch'' $W$. 
Fig. \ref{fig:5} provides a simple example of the event under
consideration. The desired paths can be found using algorithm {\sc
  PathFinder} below that builds the paths and $C(n,r)$ at the same time.

\medskip

\fbox{\begin{minipage}{11.4cm}
    {\sc PathFinder$(v, \ell_s,\ell_t,W)$}

    \vspace{-2mm}
    
\begin{quote}
\begin{description}
  \item[{\sc SimplePaths}$(v,\ell_s,\ell_t,W)$.] Starting from $v$,
    build a simple path $p^s$ of
  length $\ell_s$ that
  avoids $W$. Similarly, starting
  from $v$, build a simple path $p^t$ of length $\ell_t$ that
  avoids $W$.
\item[{\sc RandomShooting}$(p^s,p^t)$.] Pair up all un-matched points
  in $p^s$ and $p^t$. 
\end{description}

\smallskip

Complete the configuration $C(n,r)$ by pairing up all remaining points.
\end{quote}
\end{minipage}}

\medskip

Sub-algorithm {\sc SimplePaths} can complete its constructions by pairing
points starting from elements of the bucket $v$ then choosing a random
un-matched point in a bucket $u$, then picking any other point $u$ and
then again a random un-matched point and so on, essentially simulating
two random walks RW$_s$ and RW$_t$ on the set of buckets. Note that the process may fail if at any point we re-visit a
  previously visited bucket or if we hit $W$ or even $M$. However the
  following can be proved easily.

\begin{claim}
{\rm RW}$_s$ and {\rm RW}$_t$ succeed w.h.p. provided $\ell_s, \ell_t \in o(n)$.
\end{claim}

As to {\sc RandomShooting},   the process
  succeeds if we manage to hit an element of $S$ from $p^s$ and an
  element of $T$ from $p^t$. 

  \begin{claim}
    {\sc RandomShooting}$(q_s,q_t,S,T)$ succeeds w.h.p. if    $\ell_s, \ell_t \in \omega(1)$.
  \end{claim}

Any un-matched point in $p^s$ or $p^t$ after {\sc SimplePaths} is
complete is called {\em
    useful}. Path $p^s$ (resp. $p^t$) contains $q_s = (r-2) \ell_s + 1$
  (resp $q_t = (r-2)  \ell_t + 1$) useful points. During the execution
  of {\sc RandomShooting}
a single useful point ``hits'' its target set, say $S$, with
  probability proportional to the cardinality of $S$. Hence the
  probability that none of the $q_s$ useful points hits $S$ is
  $(1-\frac{\gamma}{n})^{q_s}$ and the overall success probability is
$(1-(1-\frac{\gamma}{n})^{q_s})(1-(1-\frac{\gamma}{n})^{q_t}).$

\medskip

Back to the proof of Lemma \ref{reg-lemma} Set $\ell_s = \ell_t =
\ell$ and $q$ the common value of $q_s$ an $q_t$. The argument above implies that the success probability for $U$ and
$W$ is asymptotically approximately
$(1-(1-\frac{\gamma}{n})^{q})^2$
and the rest of the argument (and its conclusion) is very similar to
the $G(n,p)$ case (the final bound is slightly weaker, though). The chance that a random $r$-regular graph is not
$k$-vertex separable is at most
\[O(n^{2k}) \times  (1-(1-(1-\frac{\gamma}{n})^{q})^2) \leq  O(n^{2k}) \times
2(1-\frac{\gamma}{n})^q \leq   O(n^{2k}) \times
2 \e^{-\frac{\gamma}{n} q},\]
 which goes to zero as $n^{-C}$ provided $k \leq (r - o(1))\frac{\gamma
  \ell}{n \log n}$. The constraints on $\ell$ from the claims above
imply that that parameter can be traded-off agains $\gamma$ to achieve
optimal identifiability.
\qed\end{proof}
\begin{figure}[t]
  \centering
  \includegraphics[scale=0.2]{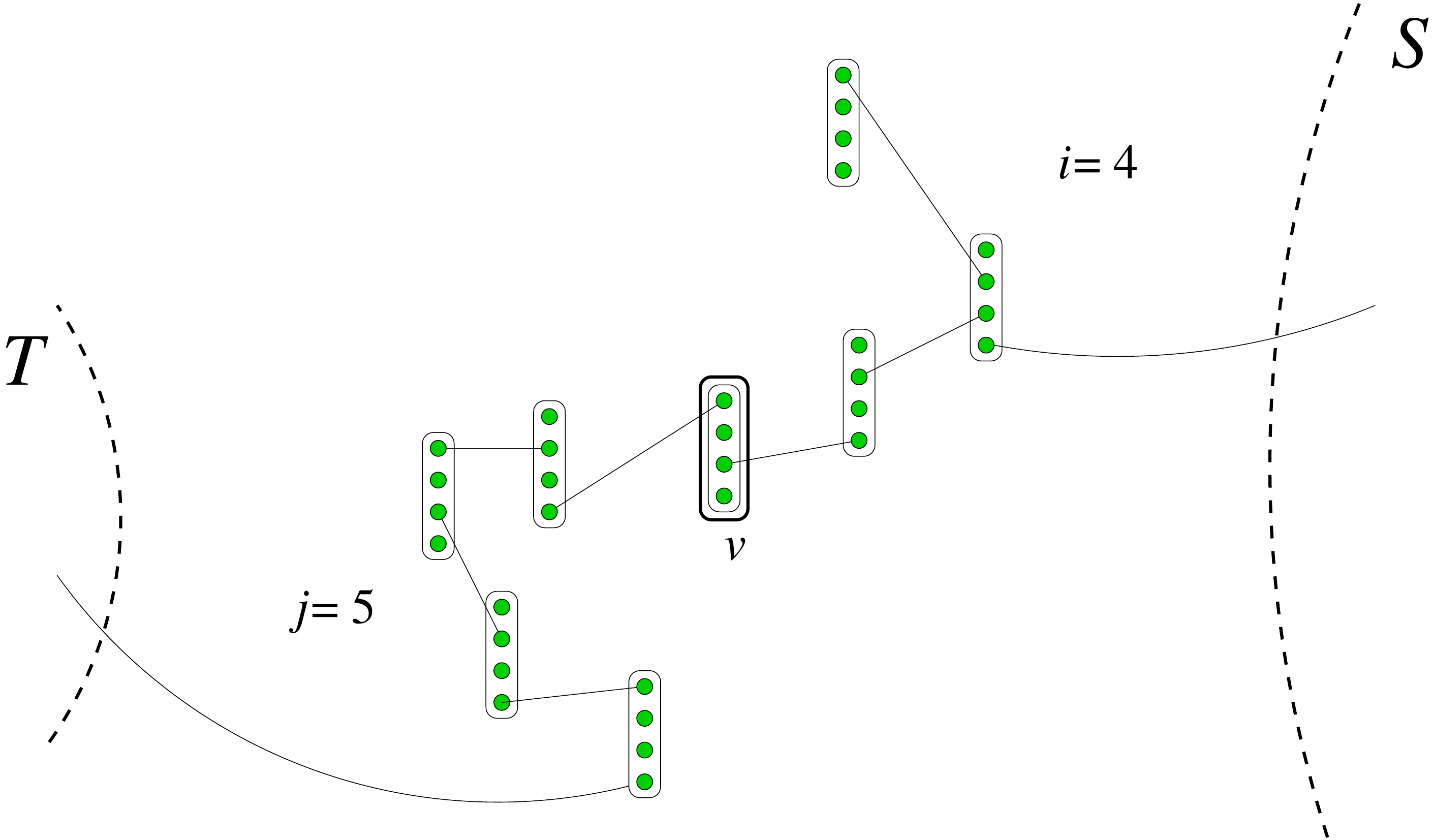}
  \caption{Assume $r=4$. The picture  represents a bucket
    (i.e. vertex) $v \in U \triangle W$ and two possible
    ``paths'' (sequences of independent edges such that consecutive
    elements involve points from the same bucket) of length 3 and 5, respectively connecting it to $S$ and $T$.
  \label{fig:5}}
\end{figure}

\newpage
\bibliographystyle{plain}

\newcommand{\etalchar}[1]{$^{#1}$}

\newpage
\section*{APPENDIX}

\begin{lemma}
Let $d\in \mathbb N^+$ and $n,\omega \in \mathbb N$, $n\geq 2$ and $\omega>2$. $\delta(\calH_{n, d, \omega})=d(\w-1)$.
\end{lemma}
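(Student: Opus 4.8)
The plan is a straightforward degree count, exploiting the fact that an edge of a LoS network joins two points differing in exactly one coordinate. Fix an arbitrary vertex $u=(u_1,\ldots,u_d)\in\{1,\ldots,n\}^d$ of $\calH_{n,d,\omega}$. Its neighbourhood $N(u)$ splits as a disjoint union $N_1(u)\cup\cdots\cup N_d(u)$, where $N_i(u)$ collects the neighbours of $u$ obtained by changing only the $i$-th coordinate; hence $\deg(u)=\sum_{i=1}^d|N_i(u)|$ and it is enough to evaluate each term.

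I would then compute $|N_i(u)|$ exactly. A value $m\in\{1,\ldots,n\}\setminus\{u_i\}$ gives a neighbour precisely when $|m-u_i|<\omega$, equivalently $|m-u_i|\le\omega-1$. Counting the admissible values below and above $u_i$ separately yields $|N_i(u)|=\min(u_i-1,\omega-1)+\min(n-u_i,\omega-1)$. The lower bound $|N_i(u)|\ge\omega-1$ then follows by a tiny case split: if $u_i-1\ge\omega-1$ or $n-u_i\ge\omega-1$, one of the two minima is already $\omega-1$; otherwise both minima are ``taken on the left'', and $|N_i(u)|=(u_i-1)+(n-u_i)=n-1\ge\omega-1$ (here one uses $\omega\le n$, which is part of the standing setting — it is exactly what makes the canonical monitor placement of size $2\omega-1$ fit on the $2n-1$ border nodes of $\calH_{n,\omega}$). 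Summing over $i$ gives $\delta(\calH_{n,d,\omega})\ge d(\omega-1)$.

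For the matching upper bound I would simply evaluate the formula at the corner vertex $u=(1,1,\ldots,1)$: there $\min(u_i-1,\omega-1)=0$ and $\min(n-u_i,\omega-1)=\omega-1$ for every $i$, so $\deg(u)=d(\omega-1)$ and hence $\delta(\calH_{n,d,\omega})\le d(\omega-1)$. Combining the two inequalities proves the lemma, and as a by-product it identifies the corners of the cube as the minimum-degree vertices (which is the fact actually used when one equates $\delta$ with $\kappa$ for augmented hypergrids).

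I do not anticipate a genuine obstacle: the whole argument is elementary counting. The only places that need care are the boundary bookkeeping — getting the counts $\min(u_i-1,\omega-1)$ and $\min(n-u_i,\omega-1)$ right at and near the faces of $\{1,\ldots,n\}^d$ — and making the hypothesis $\omega\le n$ explicit, since without it (e.g.\ $n=2$, $\omega=3$) the network collapses to a Hamming-type graph of constant degree $d(n-1)$ and the stated value $d(\omega-1)$ would be incorrect.
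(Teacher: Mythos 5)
Your proof is correct and follows essentially the same route as the paper's: count the neighbours of a vertex separately in each of the $d$ coordinate directions and observe that the minimum is attained at a corner of the cube. Your version is in fact more careful than the paper's --- the paper only accounts for fully internal vertices and vertices exactly on a face, whereas your formula $|N_i(u)|=\min(u_i-1,\omega-1)+\min(n-u_i,\omega-1)$ handles vertices at intermediate distances from the boundary, and you rightly flag that the implicit hypothesis $\omega\le n$ is needed (otherwise, e.g.\ for $n=2$, $\omega=3$, the stated value $d(\omega-1)$ is wrong), a point the paper glosses over but which is consistent with its standing assumption that $\omega$ is fixed and $n$ grows.
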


\begin{proof}
Each node $u\in \calH_{n, d, \omega}$ has $\w-1$ edges for each one of
the possible directions the node 
is linked to. There can be at most $2d$ direction. Thus an internal
node has degree $2d(\w-1)$, the nodes on a border have degree
$(d+t)(\w-1)$ for some $t \in \{0, \ldots d-1\}$ and, in particular,
the nodes on the corners of the grid have degree $d(\w-1)$. Hence the minimal 
degree in $\calH_{n, d, \omega}$ is reached at the corner nodes and it is $d(\w-1)$. 
\end{proof}

By Lemma IV.4 of \cite{GR18} we have 
$\mu(\calH_{n, d, \omega})\leq d(\w-1)$. In the remainder of this section
we provide a tight lower bound.

Given a node $u\in \calH_{n, d, \omega}$, identified as $(x_1,...,x_d)\in \mathbb{Z}^d_{n}$, we define: 
$$SE(u)=\{(y_1,...,y_d): y_1\geq x_1 \vee...\vee y_d\geq x_d\}, \qquad \mbox{and}$$
$$NW(u)=\{(y_1,...,y_d): y_1\leq x_1 \vee...\vee y_d\leq
x_d\}. \qquad \quad$$

Note that $\calH_{n, d, \omega}$ has many more edges than the simple hypergrid $\calH_{n,d}$ (studied in \cite{GR18}). For non-degenerate monitor placement $(S,T)$, we place $d\w-1$ input monitors on the west and north borders of  $\calH_{n, d, \omega}$ and $d\w-1$ output monitors on the south and east borders of  $\calH_{n, d, \omega}$.

\begin{definition} 
Given a node $u \in \calH_{n, d, \omega}$ and $W$ a set of nodes in $\calH_{n, d, \omega}$. 
We say that a direction $X$ ($2d$ directions) is {\em $W$-saturated} on $u$ if moving from $u$ on direction $X$ there is, right after $u$,  a consecutive block of $\w-1$ nodes in $W$.
\end{definition}

\medskip

\paragraph{Proof of Theorem \ref{full-los-d}.}
We have to prove that for two node sets $U$, and $W$ of cardinality at most $d(\w-1)-1$, with $U\triangle W \not =\emptyset$ we can build a path from $S$ to $T$ touching exactly one of them.  Given a node $u \in U \setminus W$, let $S(u)=NW(u)$, the nodes in the North-West of $u$ and let $T(u)=SE(u)$, the nodes in South-East of $u$.
Notice that $(\overbrace{1\times \cdots \times 1}^{d}) \in S(u)$ and $(\overbrace{n\times \cdots \times n}^{d})\in T(u)$ and $S(u) \cap T(u) =\emptyset$.  Since $|S|=d\w-1>d(\w-1)$ and $|W|\leq d(\w-1)$, there is a node in $s \in S\setminus W$. Assume that $s=(1,...,1)$ (if $s\not =(1,...,1)$ is similar). Similarly for $T$, assume that $t=(n,...,n) \not \in W$.  Now let $u\in S(u)$. We show that there is a path $p_u$ in $S(u)$ from $s=(1,...,1)$ to $u$ not touching $W$. There is also a path $q_u$ in $T(u)$ from $u$ to $t=(n,...,n)$ not touching $W$. We prove the first one since they are the same.
By induction on $S(u)$. If $|S(u)|=1$, then $u=s$. Take the path $u$
itself and we have done. If $|S(u)|>1$, since $|W| \leq d(\w-1)-1$,
and since a direction is $W$-saturated only if a block of $\w-1$
consecutive elements of $W$ appear after $u$ on that direction, then
there is at least a direction $X$ between North and West which is not
$W$-saturated. Hence there is a node $u'\in S(u) \setminus W$ on
direction $X$ from $u$ at distance less than $\w$. Thus there is an
edge  $\{u',u\} \in \calH_{n, d, \omega}$. Since $S(u') \subset S(u)$,
by the inductive hypothesis we have a path $p_{u'}$ as required. Hence
the path $p_u=p_{u'},\{u',u\}$ is as required. Now we have found two
disjoint paths $p_u$ in $S(u)$ from $s$ to $u$  and $q_u$ in $T(u)$
form $u$ to $t$ not touching $W$. Their concatenation gives us a path
from $S$ to $T$ passing from $u$ and not touching $W$ and proves the
theorem.  \qed

\end{document}